\title{Cryptanalysis of a System based on \\ Twisted Dihedral Group Algebras}
\author{Simran Tinani\thanks{This research is supported by armasuisse Science and Technology.}}
\DeclareMathOperator{\Mat}{Mat}
\newcommand{\F}{\mathbb{F}}
\newcommand{\N}{\mathbb{N}}
\newcommand{\pk}{\mathrm{pk}}
\newcommand{\fqst}{\mathbb{F}_{q}^*}
\newcommand{\fqa}{\mathbb{F}_{q}^\alpha}
\newcommand{\fqm}{{\mathbb{F}_{q^m}}}
\newcommand{\fqn}{\mathbb{F}_{q}^n}
\newcommand{\fq}{{\mathbb{F}_q}}
\newtheorem{protocol}{Protocol}
\newtheorem{lemma}{Lemma}
\newtheorem{proposition}{Proposition}
\newtheorem{corollary}{Corollary}
\theoremstyle{definition}
\newtheorem{definition}{Definition}
\newtheorem{example}{Example}
\theoremstyle{remark}
\newtheorem{remark}{Remark}
\begin{document}
\maketitle

\begin{abstract} Several cryptographic protocols constructed based on less-known algorithmic problems, such as those in non-commutative groups, group rings, semigroups, etc., which claim quantum security, have been broken through classical reduction methods within their specific proposed platforms. A rigorous examination of the complexity of these algorithmic problems is therefore an important topic of research. In this paper, we present a cryptanalysis of a public key exchange system based on a decomposition-type problem in the so-called twisted group algebras of the dihedral group $D_{2n}$ over a finite field $\fq$. Our method of analysis relies on an algebraic reduction of the original problem to a set of equations over $\fq$ involving circulant matrices, and a subsequent solution to these equations. Our attack runs in polynomial time and succeeds with probability at least $90$ percent for the parameter values provided by the authors. We also show that the underlying algorithmic problem, while based on a non-commutative structure, may be formulated as a commutative semigroup action problem.
\end{abstract}

\section{Introduction}
The design of efficient cryptographic systems that resist quantum attacks presently constitutes the important area of research called post-quantum cryptography. Non-commutative structures such as nonabelian groups, group rings, semigroups, etc., along with pertinent algorithmic problems, have been used for the construction of public key cryptosystems in a plethora of works in this field. Two algorithmic problems that have found great mention in this realm are the so-called conjugacy search problem and the decomposition problem (see \cite{shpilrain2005new}, \cite{cavallo}, \cite{Gu2014ConjugacySB}). Since such problems in general cannot be formulated as a version of the hidden subgroup problem in a finite abelian group, they have been suggested to render the corresponding cryptographic systems secure from known quantum attacks. However, specific instances of these problems are often solvable through other classical methods and do not have the presumed complexity in the specific suggested platform (see, for instance \cite{complex}, \cite{ben2018cryptanalysis}). Several linear algebra attacks on such cryptosystems have been devised that retrieve the shared key, often without solving the underlying algorithmic problem \cite{tsaban2015polynomial}, \cite{myasnikov2015linear}.

In \cite{orig}, the authors construct a key exchange system based on so-called twisted group algebras over a finite field $\fq$, which are similar to group algebras but have a more complicated multiplicative structure. Group algebras have found mention in some other proposed public key cryptographic schemes. In \cite{mat-gpring}, the authors construct a key exchange protocol based on the discrete logarithm problem in the semigroup $\Mat_3(\F_7[S_5])$ of $3\times 3$ matrices over the group ring $\F_7[S_5]$, where $S_5$ is the group of permutation on five symbols. In \cite{quantum-gpring}, an attack was devised by showing that $\Mat_3(\F_7[S_5])$
embeds into $\Mat_{360}(\F_7)$, for which the discrete logarithm problem can then be solved using the method in \cite{menezes} adapted to singular matrices.
The attack in \cite{efte} on the same system uses the fact that the algebra $\F_7[S_5]$ is semisimple, and so by
Maschke’s theorem it is isomorphic to a direct sum
of matrix algebras over $\F_7$. 

The authors of \cite{orig} assert that since Maschke’s Theorem is valid also for twisted group algebras, a similar attack might break the underlying problem of their system. However, to resolve this they choose $q$ such that the twisted group algebra is not semisimple. Further, they assert that the general methods of cryptanalysis in \cite{Romankov2017AGE} and \cite{Romankov-gen}, which require the construction of bases over some vector spaces, do not apply to their system. This is attributed to the facts that the twisted group algebra is not a group under the twisted multiplication and that there is an added dimension of non-commutativity with the twisted multiplication. 

The underlying platform of the system in \cite{orig} is a twisted group algebra of the dihedral group $D_{2n}$ over a finite field $\fq$ with twisted multiplication defined with the help of a function called a 2-cocycle. The 2-cocycle $\alpha$ is chosen by the authors such that $\fqa D_{2n}$ and $\fq D_{2n}$ are not isomorphic, so that one is no longer working over a group algebra. Some recent relevant works on twisted group algebras are \cite{rings} and \cite{twistedgpcodes}. In \cite{twistedgpcodes}, the authors study right ideals of twisted group algebras, endowing them with a natural distance and thus studying them as codes; they show that that all perfect linear codes are twisted group codes. In \cite{rings}, the authors use twisted dihedral group rings as a platform for a public key protocol as a non-commutative variation of the Diffie-Hellman protocol. This protocol has a similar platform to the one in \cite{orig}, but with the twisted multiplication and 2-cocycle defined differently. The authors show in \cite{orig} that the twisted group algebra platforms are structurally different. 

 The security of the protocol in \cite{orig} relies on a newly introduced algorithmic assumption, which the authors call Dihedral Product Decomposition (DPD) Assumption. Under this assumption, the authors prove that their  protocol is session-key secure in the authenticated-links adversarial model of Canetti and Krawczyk \cite{canetti}. The underlying algorithmic problem can be seen as a special form of the decomposition problem over the multiplicative monoid of an algebra $A$: given $(x, y) \in A$ and $ S \subseteq G$, the problem is to find $z_1, z_2 \in S$ such that $y = z_1 x z_2$. The Dihedral Product Decomposition Problem constitutes finding $(z_1,z_2)$ given $z_1 x z_2$ and $x$ in the platform, where $z_1$ and $z_2$ lie in specific predefined subalgebras of $\fqa D_{2n}$. It is therefore a more restricted version of the general decomposition problem in the platform. The authors claim that the protocol proposed is quantum-safe, with justification based on the fact that the decomposition problem is a generalization of the conjugacy search problem, which is believed to be difficult even for quantum computers, in certain platform groups. 


In this paper we show that in most cases, the underlying Dihedral Product Decomposition (DPD) Problem can be solved algebraically with a classical polynomial time algorithm. As a result, the Dihedral Product Decomposition Assumption does not hold, and the security of the system breaks down completely. We do this by producing an algebraic reduction of the original problem to a set of equations over $\fq$ involving circulant matrices, which we show can be solved in polynomial time in a majority of cases. We show that our algorithm succeeds with probability $1-(1-\frac{1}{q})^2$, which gives a lower bound of a 90 percent success rate with the values of $q$ and $n$ proposed by the authors. We also show that the underlying DPD problem may be formulated as a semigroup action problem \cite{maze}, with multiplication in the multiplicative monoid of a twisted dihedral group algebra. Some other protocols using this method have been proposed in \cite{gp-comm}, \cite{mat-gpring}, \cite{maze}.

The paper is structured as follows. In Section~\ref{struc} we describe the structure and some properties of the underlying platform, viz. the twisted group algebra $\fqa D_{2n}$, closely following the results of \cite{orig}. In Section~\ref{keyexch-sec}, we describe the key exchange protocol proposed in \cite{orig} and state the DPD problem, which forms the basis of its security assumption. We show that despite the use of a non-commutative structure, this algorithmic problem is equivalent to a commutative semigroup action problem. In Section~\ref{circu}, we present some background definitions and results on circulant matrices, which are needed for our reduction and cryptanalysis. In Section~\ref{rdxn}, we describe an algebraic reduction of the DPD problem to a set of simultaneous equations over $\fq$ and show that in a majority of cases, they can be solved by linear algebra in polynomial time. Using these results, we provide a polynomial time algorithm which performs the cryptanalysis of the system of \cite{orig}. 

Throughout, we let $\F$ denote a field, $G$ denote a finite group and $\fq$ denote the finite field with $q$ elements, where $q$ is a power of a prime. Also let $\fqst=\fq\setminus\{0\}$. We denote by $D_{2n} $ the dihedral group of size $2n$. 

\section{Structure of the Platform}\label{struc}

\begin{definition}[Group Algebra]\label{gpalgebra} The group algebra $\F[G]$ is the set of the formal sums $\sum\limits_{g\in G} a_g g$, with $a_g\in \F$, $g\in G$. Addition is defined componentwise: $\sum\limits_{g\in G} a_g g + \sum\limits_{g\in G} b_g g:= \sum\limits_{g\in G} (a_g+b_g) g$. Multiplication is defined as $\sum\limits_{g\in G} a_g g \cdot \sum\limits_{g\in G} b_g g:= \sum\limits_{g\in G}\sum\limits_{h\in G} (a_gb_h) gh = \sum\limits_{k\in G} \sum\limits_{g\in G, \; h\in G: gh=k} a_gb_h k$.
\end{definition}

Clearly, $\F[G]$ is an algebra over $\F$ with dimension $|G|$. If $G$ is non-commutative, so is $\F[G]$.

In \cite{sym11081019}, a new form of multiplication on the $\F$-vector space $\F[G]$ is described, which produces what are called twisted group algebras, using the concept of 2-cocycles.

\begin{definition}[2-Cocycle]
 A map $\alpha:G\times  G \rightarrow  \fqst$ is called
 a 2-cocycle
of $G$ if $\alpha(1, 1) = 1$ and for all $g, h, k \in  G$ we have $\alpha(g, hk)\alpha(h, k) =
\alpha(gh, k)\alpha(g, h)$.
\end{definition}

\begin{definition}[Twisted Group Algebra]
 Let $\alpha$ be a 2-cocycle of $G$. The twisted group algebra $\F^\alpha G$ is
the set of all formal sums $\sum\limits_{g\in G}
 a_g g$, where $a_g \in \F$, with the following twisted multiplication:
$g \cdot h =\alpha(g, h)g h$, for $g, h\in G$. The multiplication rule extends linearly to all elements of the algebra: $(\sum\limits_{g\in G}a_g g)\cdot (\sum\limits_{h\in G}b_h h) = \sum\limits_{g\in G}\sum\limits_{h\in G}a_g b_h \alpha(g, h) gh$. Addition is given componentwise as in Definition~\ref{gpalgebra}. \end{definition}

\begin{remark}
 Throughout the rest of the paper, we will be concerned with twisted group algebras, and so it is understood that the product $(\sum\limits_{g\in G}a_g g)\cdot (\sum\limits_{h\in G}a_h h)$ denotes twisted multiplication. Further, we will usually omit the $\cdot$ symbol, so that multiplication in the group $G$ and in the twisted group algebra are not differentiated by operation notation. To avoid confusion we ensure that the symbols used for elements of the group and group algebra do not intersect. \end{remark}
 

Denote the set of all 2-cocycles of $G$ into $\fq$ by $Z^2(G, \fqst)$. For $\alpha, \beta \in Z^2(G, \fqst)$, one may define the cocycle $\alpha\beta\in  Z^2(G, \fqst)$ by $\alpha\beta(g, h) = \alpha(g, h)\beta(g, h)$ for all $g, h \in G$. With this operation, $Z^2(G, \fqst)$ becomes a multiplicative abelian group.


\begin{definition}[Adjunct]
For an element $a =\sum\limits_{g\in G} a_g g \in \fqa G$ we define its adjunct as
$\hat{a} := \sum\limits_{g\in G}
a_g\alpha(g, g^{-1}))g^{-1}$
\end{definition} 

\subsection{A twisted dihedral group algebra}

For the rest of this paper, we set $G = D_{2n}$, where
$D_{2n} = \langle x, y : x
^n = y^2 = 1, yxy^{-1} = x^{-1}\rangle$ is the dihedral group of order $2n$. Further, we let $C_n = \langle x^i\rangle$ be the cyclic subgroup of $D_{2n}$ generated by $x$ and $\alpha$ be a
2-cocycle of $D_{2n}$. 

The following lemma from \cite{orig} can be verified in a straightforward manner.
\begin{lemma}[\cite{orig}]
We have 
\begin{enumerate}
    \item $\fqa D_{2n}$
is a free $\fqa C_n$-module with basis $\{1, y\}$. Therefore $\fqa D_{2n} = \fqa C_n \oplus \fqa C_n y$ as a direct sum of $\fq$-vector spaces. 
\item $\fqa C_ny \cong \fqa C_n$ as $\fqa C_n$-modules.
\item For $a \in \fqa C_n y$, $ab \in \fqa C_n$ if $b \in \fqa C_n y$ and $ab \in \fqa C_n y$ if $b \in \fqa C_n$.
\item If $a \in  \fqa C_n $, then $\Hat{a} \in \fqa C_n $. Similarly, if $a \in \fqa C_n y$, then $\Hat{a} \in \fqa C_n y$.
\end{enumerate}
\end{lemma}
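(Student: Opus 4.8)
The plan is to verify each of the four assertions directly from the definitions of $D_{2n}$, the twisted multiplication, and the adjunct, exploiting the fact that as an $\fq$-vector space $\fqa D_{2n}$ has basis $\{x^i, x^iy : 0 \le i < n\}$, and that every element of $D_{2n}$ is either of the form $x^i$ (lying in $C_n$) or $x^iy$ (lying in the coset $C_ny$). The key structural input is the multiplication table of $D_{2n}$ itself: products of the shape $x^i \cdot x^j = x^{i+j}$, $x^i \cdot x^jy = x^{i+j}y$, $x^iy \cdot x^j = x^{i-j}y$, and $x^iy \cdot x^jy = x^{i-j}$, together with the observation that the $\fq$-coefficient $\alpha(g,h)$ introduced by the twist is always a nonzero scalar and hence never moves a basis element between the two blocks $C_n$ and $C_ny$.

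First I would prove (1): since $1$ and $y$ together with left multiplication by the $x^i$ generate all of $\{x^i, x^iy\}$, and $\alpha$-twisting only rescales coefficients, $\{1,y\}$ spans $\fqa D_{2n}$ over $\fqa C_n$; linear independence over $\fqa C_n$ follows because $\fqa C_n \cdot 1 \subseteq \mathrm{span}_{\fq}\{x^i\}$ and $\fqa C_n \cdot y \subseteq \mathrm{span}_{\fq}\{x^iy\}$ are complementary $\fq$-subspaces, giving the internal direct sum decomposition. For (2), the map $\fqa C_n \to \fqa C_n y$ sending $a \mapsto ay$ is $\fqa C_n$-linear (associativity of the twisted product), $\fq$-linear, and bijective since right multiplication by $y$ permutes the basis $\{x^i\}$ onto $\{x^iy\}$ up to nonzero scalars; hence it is an isomorphism of $\fqa C_n$-modules. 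Assertions (3) and (4) are then just the block bookkeeping: for (3), writing $a \in \fqa C_ny$ and $b$ in one of the two summands, every product of basis elements $x^iy \cdot x^j$ lands in $C_ny$ and $x^iy \cdot x^jy$ lands in $C_n$, with the twist contributing only scalars, so $ab$ lies in the asserted summand; for (4), the adjunct of $x^i$ is a scalar times $(x^i)^{-1} = x^{-i} \in C_n$, while the adjunct of $x^iy$ is a scalar times $(x^iy)^{-1} = x^iy \in C_ny$ (since $x^iy$ is an involution in $D_{2n}$), and extending $\fq$-linearly preserves membership in each block.

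I expect no serious obstacle here — the statement is flagged in the excerpt as routine (``can be verified in a straightforward manner''), and the only mild care needed is in (1), where one must distinguish ``spanning/independence over the ring $\fqa C_n$'' from the easier ``$\fq$-vector space'' statement, and in (4), where one should double-check that $\alpha(g,g^{-1})$ is a well-defined nonzero scalar so the adjunct is genuinely a single basis element rescaled, and that $g^{-1}$ stays in the same block as $g$ for both $g = x^i$ and $g = x^iy$. No twisted-associativity subtleties arise beyond invoking the 2-cocycle identity once to justify that the module maps in (1) and (2) are genuinely $\fqa C_n$-linear.
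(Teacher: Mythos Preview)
Your proposal is correct and follows exactly the route the paper indicates: the paper does not give a detailed proof of this lemma but simply remarks that it ``can be verified in a straightforward manner'' and cites \cite{orig}, which is precisely the direct verification from the $D_{2n}$ multiplication table and the definition of the adjunct that you outline. The only places requiring any care---checking that the $\fqa C_n$-action is associative via the 2-cocycle identity, and that $(x^iy)^{-1}=x^iy$ so inverses stay in the same block---are the ones you already flag.
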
 

\begin{definition} \begin{enumerate}
    \item For a 2-cocycle $\alpha$ of $D_{2n}$ we define the reversible subspace
of $\fqa C_ny$ as the vector subspace
\[\Gamma_\alpha = \{a =\sum\limits_{i=0}^{n-1}a_ix^iy \in \fqa C_ny \mid a_i = a_{n-i}\text{ for } i = 1, \ldots, n-1\}.\]
\item Define a map $\psi : \fqa C_n y \rightarrow \fqa C_n $ as follows. Given $a =
\sum\limits_{i=0}^{n-1}
 a_ix^iy \in \fqa C_ny $ we define $\psi(a) = \sum\limits_{i=0}^{n-1} a_ix^i \in \fqa C_n$.
Clearly, $\psi$ is an $\fq$-linear isomorphism. \end{enumerate}
\end{definition}
In this paper, we will refer to an element $\sum\limits_{i=1}^{n-1} a_ix^i y$ of the reversible subspace $\Gamma_\alpha$ as a {reversible element} of $\fqa C_n y$ and to the corresponding vector $(a_0, \ldots, a_{n-1}) \in \fqn$ as a {reversible vector}.

\begin{lemma}[\cite{orig}]\label{commutativity} Let $\alpha$ be a 2-cocycle of $D_{2n}$. Then we have
\begin{enumerate}
    \item If \begin{equation}\label{cond1} \alpha(x^i, x^{j-i}) = \alpha(x^{j-i}, x^i)\end{equation}
for all $i, j \in  \{0,\ldots, n - 1\}$, then $ab = ba$ for $a, b \in \fqa C_n$.
\item If \begin{equation}\label{cond2} \alpha(x^{i-j}y,x^{i-j}y)\alpha (x^iy, x^{i-j}y) =\alpha(x^{n-i}y, x^{n-i} y)\alpha(x^{j-i}
y, x^{n-i}y)\end{equation} for all $i, j \in  \{0,\ldots, n- 1\}$, then $a\Hat{b} = b\Hat{a}$ for $a, b \in \Gamma_\alpha$.
\end{enumerate}
\end{lemma}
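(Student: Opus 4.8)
The plan is to verify both statements by direct computation, expanding products in the twisted group algebra using the rule $g\cdot h=\alpha(g,h)gh$ and comparing coefficients term by term. The key observation is that since multiplication is $\fq$-bilinear, it suffices to check the asserted identities on basis elements of the relevant subspaces, namely on $x^i$ and $x^j$ for part (1), and on $x^iy$ and $x^jy$ for part (2); the general case then follows by linearity.

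For part (1): I would take $a=\sum_i a_i x^i$ and $b=\sum_j b_j x^j$ in $\fqa C_n$, and compute both $ab$ and $ba$. Expanding, $ab=\sum_{i,k} a_i b_k\,\alpha(x^i,x^k)\,x^{i+k}$ and $ba=\sum_{i,k} b_k a_i\,\alpha(x^k,x^i)\,x^{k+i}$. Collecting the coefficient of a fixed group element $x^j$ (indices mod $n$), the coefficient in $ab$ is $\sum_{i} a_i b_{j-i}\,\alpha(x^i,x^{j-i})$ and in $ba$ it is $\sum_{i} a_i b_{j-i}\,\alpha(x^{j-i},x^{i})$. Hypothesis \eqref{cond1} makes these coefficients equal for every $j$, hence $ab=ba$. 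This direction is essentially bookkeeping with the cyclic index arithmetic.

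For part (2): this is the more delicate computation and I expect it to be the main obstacle, because it involves not only the twisted multiplication but also the adjunct $\hat{b}$, which introduces the extra factors $\alpha(g,g^{-1})$ and an index inversion. The strategy is: first record that for $b=\sum_j b_j x^j y\in\Gamma_\alpha$, by definition $\hat{b}=\sum_j b_j\,\alpha(x^jy,(x^jy)^{-1})\,(x^jy)^{-1}$, and in $D_{2n}$ one has $(x^jy)^{-1}=x^jy$ (since $(x^jy)^2=1$), so $\hat b=\sum_j b_j\,\alpha(x^jy,x^jy)\,x^jy$ still lies in $\fqa C_ny$ — consistent with the earlier lemma. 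Then, using part (3) of the earlier lemma, $a\hat b\in\fqa C_n$; I would compute the coefficient of each $x^\ell$ in $a\hat b$ by multiplying out $\big(\sum_i a_i x^iy\big)\cdot\big(\sum_j b_j\,\alpha(x^jy,x^jy)\,x^jy\big)$, carefully using $x^iy\cdot x^jy=\alpha(x^iy,x^jy)\,(x^iy)(x^jy)=\alpha(x^iy,x^jy)\,x^{i-j}$ in the group. After reindexing (setting the running group exponent to $i-j$), the coefficient of a fixed $x^\ell$ in $a\hat b$ becomes a sum over $i$ of $a_i b_{?}$ times a product of two $\alpha$-values; doing the same for $b\hat a$ and invoking the reversibility relations $a_i=a_{n-i}$, $b_j=b_{n-j}$ to match up the summation indices, the equality $a\hat b=b\hat a$ reduces exactly to condition \eqref{cond2} holding for all $i,j$. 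The main care needed is in tracking the four $\alpha$-arguments (two from the twisted products $x^iy\cdot x^jy$, two from the adjuncts) and in choosing the substitution $j\mapsto n-j$ or $i\mapsto n-i$ correctly so that the two sides align index-for-index; once the substitutions are set up, \eqref{cond2} is precisely the term-by-term condition that remains.
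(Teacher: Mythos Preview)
The paper does not supply its own proof of this lemma; it is quoted verbatim from \cite{orig} and stated without argument. Your proposal is the natural direct-computation proof and is correct: for part (1) the coefficient comparison is straightforward, and for part (2) your key observations---that $(x^jy)^{-1}=x^jy$ so that $\hat b=\sum_j b_j\,\alpha(x^jy,x^jy)\,x^jy$, that $(x^iy)(x^jy)=x^{i-j}$ in $D_{2n}$, and that the reversibility relations $a_i=a_{n-i}$, $b_j=b_{n-j}$ allow the reindexing $k\mapsto m-i$ in the sum for $b\hat a$---lead exactly to condition \eqref{cond2} as the term-by-term requirement. There is nothing to compare against in this paper, but your argument would constitute a complete proof.
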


The following lemma provides an explicit construction of the 2-cocycle that will be used throughout in the cryptographic construction of \cite{orig}. 

\begin{lemma}[\cite{orig}]
 Let $\lambda \in \fqst=\fq\setminus\{0\}$. The map $\alpha_\lambda : D_{2n} \times D_{2n}\rightarrow \fqst$ defined by \begin{align}\label{cocycle}
    \alpha_\lambda(g, h) &= \lambda \ \text{for} \ g = x^iy, \ h  = x^jy \ \text{with} \ i, j \in \{0, \ldots, n -1\} \ \text{and} \nonumber \\
\alpha_\lambda(g, h) &= 1 \  \text{otherwise} 
 \end{align} is a 2-cocycle. Further, $\alpha_\lambda$ satisfies the two conditions \eqref{cond1} and \eqref{cond2}.
\end{lemma}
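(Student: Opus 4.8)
The plan is to exhibit $\alpha_\lambda$ as the inflation, along the natural projection $D_{2n}\twoheadrightarrow D_{2n}/C_n$, of a simple $2$-cocycle on a group of order $2$; this reduces the cocycle identity to a trivial finite check. The two conditions \eqref{cond1} and \eqref{cond2} will then follow immediately because each of them involves only arguments of a single ``type'' (all rotations, respectively all reflections).

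First I would introduce the sign map $\tau : D_{2n}\to\{0,1\}$ (with addition mod $2$) defined by $\tau(x^i)=0$ and $\tau(x^i y)=1$. Since $C_n=\langle x\rangle$ is a normal subgroup of index $2$, this is the quotient homomorphism $D_{2n}\to D_{2n}/C_n\cong\Z/2\Z$, so in particular $\tau(gh)\equiv\tau(g)+\tau(h)\pmod 2$ for all $g,h$. The key observation is that the piecewise definition \eqref{cocycle} can be written uniformly as $\alpha_\lambda(g,h)=\lambda^{\tau(g)\tau(h)}$, because $\tau(g)\tau(h)=1$ holds precisely when both $g$ and $h$ lie in $C_n y$, i.e.\ both are reflections. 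In particular $\alpha_\lambda$ takes values in $\{1,\lambda\}\subseteq\fqst$ and $\alpha_\lambda(1,1)=\lambda^{0}=1$.

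Next I would verify the cocycle identity $\alpha_\lambda(g,hk)\,\alpha_\lambda(h,k)=\alpha_\lambda(gh,k)\,\alpha_\lambda(g,h)$. Substituting the formula above, this amounts to the equality of exponents of $\lambda$,
\[
\tau(g)\,\tau(hk)+\tau(h)\,\tau(k)\;=\;\tau(gh)\,\tau(k)+\tau(g)\,\tau(h).
\]
Using that $\tau$ is a homomorphism, replace $\tau(hk)$ by $(\tau(h)+\tau(k))\bmod 2$ and $\tau(gh)$ by $(\tau(g)+\tau(h))\bmod 2$; both sides then depend only on the triple $(\tau(g),\tau(h),\tau(k))\in\{0,1\}^3$, and a check of the (at most) eight cases shows the two integers coincide in every case, so the two powers of $\lambda$ coincide. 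Equivalently one notes that the map $\beta$ on $\Z/2\Z$ with $\beta(1,1)=\lambda$ and $\beta=1$ otherwise is a $2$-cocycle of $\Z/2\Z$ and that $\alpha_\lambda=\beta\circ(\tau\times\tau)$, the inflation of $\beta$ along $\tau$, which is automatically a $2$-cocycle.

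Finally, for the two special conditions: in \eqref{cond1} all arguments that occur, namely $x^i$ and $x^{j-i}$, lie in $C_n$, so $\tau$ vanishes on each and both sides equal $\lambda^{0}=1$; in \eqref{cond2} all arguments that occur, namely $x^{i-j}y$, $x^i y$, $x^{n-i}y$ and $x^{j-i}y$, lie in $C_n y$, so $\tau$ equals $1$ on each and both sides reduce to $\lambda\cdot\lambda=\lambda^{2}$. Hence both identities hold for all $i,j\in\{0,\dots,n-1\}$. I do not expect a real obstacle in this proof; the only point worth a moment's care is to confirm that the exponent identity in the cocycle check holds for the actual integer values of $\tau(\cdot)$ (after reducing the relevant group products modulo $2$), not merely modulo $2$, so that the resulting equality of powers of $\lambda$ is valid for every $\lambda\in\fqst$, irrespective of its multiplicative order.
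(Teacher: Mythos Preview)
Your proof is correct. Both your argument and the paper's reduce the cocycle identity to a finite case check and read off conditions \eqref{cond1} and \eqref{cond2} directly from the definition; the paper simply writes $h=x^{j_1}y^{k_1}$, $k=x^{j_2}y^{k_2}$ and splits on whether $g\in C_n$ or $g\in C_n y$, whereas you factor through the sign homomorphism $\tau:D_{2n}\to\Z/2\Z$ and recognise $\alpha_\lambda=\lambda^{\tau(\cdot)\tau(\cdot)}$ as the inflation of a cocycle on $\Z/2\Z$. The payoff of your route is conceptual: it explains \emph{why} $\alpha_\lambda$ is a cocycle (inflation always produces one) and makes transparent that $\alpha_\lambda$ depends only on the cosets of $C_n$, so the check collapses to the eight triples in $\{0,1\}^3$; your closing remark that the exponent identity must hold over the integers (not merely modulo the order of $\lambda$) is well taken and is precisely what the eight-case check confirms. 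The paper's bare-hands verification is shorter to state but carries no such structural information.
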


\begin{proof}
By definition, $\alpha_\lambda(1, 1) = 1$. Thus one only needs to verify that $\alpha_\lambda(g, h)\alpha_\lambda(gh, k) =
\alpha_\lambda(g, hk)\alpha_\lambda(h, k)$ for all $g, h, k \in D_{2n}$. Write $h = x^{j_1} y^{k_1}$ and
$k = x^{j_2} y^{k_2}$
with $i, j_1, j_2\in \{0, \ldots, n - 1\}$. The condition may then be directly verified separately in a straightforward way for the two possible cases $g=x^i$ and $g = x^iy$. The fact that $\alpha_\lambda$ satisfies conditions \eqref{cond1} and \eqref{cond2} follows from the definition.
\end{proof}

\begin{lemma}[\cite{orig}]
$\fq D_{2n}$ and $\fq^{\alpha_\lambda} D_{2n}$ are isomorphic if and only if $\lambda$ is a square in $\fq$, i.e. if and only if $\lambda^{(q -1)/2} = 1$.
\end{lemma}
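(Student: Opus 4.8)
The plan is to prove the two implications separately, in both cases reducing everything to the explicit multiplication rule $g\cdot h = \alpha_\lambda(g,h)\,gh$ and the basis $\{x^i,\,x^iy : 0\le i\le n-1\}$ of the algebra.

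\textbf{The ``if'' direction.} Suppose $\lambda = \mu^2$ with $\mu\in\fqst$. I would write down the $\fq$-linear map $\Phi\colon \fq^{\alpha_\lambda}D_{2n}\to\fq D_{2n}$ determined on basis elements by $\Phi(x^i)=x^i$ and $\Phi(x^iy)=\mu\,x^iy$; equivalently, $\Phi$ rescales each basis element $g$ by $t(g)$, where $t\colon D_{2n}\to\fqst$ is the function with $t(x^i)=1$ and $t(x^iy)=\mu$ for all $i$. Since $t$ is nowhere zero, $\Phi$ is a linear bijection, so it only remains to check that $\Phi$ is multiplicative, i.e. that $\alpha_\lambda(g,h)=t(g)t(h)t(gh)^{-1}$ for all $g,h\in D_{2n}$. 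Splitting into the four cases $x^i\cdot x^j$, $x^i\cdot x^jy$, $x^iy\cdot x^j$, $x^iy\cdot x^jy$, the cocycle $\alpha_\lambda$ is trivial in the first three (and there $t(g)t(h)=t(gh)$ holds directly), while in the last case $x^iy\cdot x^jy = x^{i-j}$, so $t(gh)=1$ and $t(g)t(h)t(gh)^{-1}=\mu^2=\lambda=\alpha_\lambda(x^iy,x^jy)$. This exhibits $\alpha_\lambda$ as a coboundary and $\Phi$ as an $\fq$-algebra isomorphism.

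\textbf{The ``only if'' direction.} Here a direct argument works without contraposition. Suppose $\Psi\colon \fq^{\alpha_\lambda}D_{2n}\to\fq D_{2n}$ is an $\fq$-algebra isomorphism, and let $\epsilon\colon \fq D_{2n}\to\fq$, $\sum_{g}a_gg\mapsto\sum_{g}a_g$, be the augmentation homomorphism (a unital $\fq$-algebra map, induced by the trivial representation of $D_{2n}$). Then $\theta:=\epsilon\circ\Psi$ is a unital $\fq$-algebra homomorphism $\fq^{\alpha_\lambda}D_{2n}\to\fq$. Applying $\theta$ to the identity $y\cdot y = \alpha_\lambda(y,y)\cdot 1 = \lambda\cdot 1$, which holds in $\fq^{\alpha_\lambda}D_{2n}$, yields $\theta(y)^2=\lambda$, so $\lambda$ is a square in $\fq$. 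Finally, for $q$ odd this is equivalent to $\lambda^{(q-1)/2}=1$; when $q$ is even every element of $\fqst$ is a square and, by the first part, the two algebras are always isomorphic, consistent with the statement.

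I do not expect a genuine obstacle. The real content is the observation that $\fq D_{2n}$ always carries the augmentation map to $\fq$, whereas $\fq^{\alpha_\lambda}D_{2n}$ admits a unital homomorphism to $\fq$ (equivalently, a one-dimensional $\fq$-representation) only when $\lambda$ is a square, together with the standard fact that a coboundary twist does not change the isomorphism type. The only point requiring care is the bookkeeping in the ``if'' direction: verifying $\alpha_\lambda(g,h)=t(g)t(h)t(gh)^{-1}$ across all basis products, where one must use that $\alpha_\lambda$ is nontrivial exactly on pairs in $C_ny\times C_ny$ and that such products land back in $\fq C_n$ via $x^iyx^jy = x^{i-j}$.
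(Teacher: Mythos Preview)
The paper does not actually supply a proof of this lemma: it is stated with the citation \cite{orig} and immediately followed by the next cited lemma, so there is nothing to compare your argument against within this paper itself.

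That said, your proof is correct and self-contained. The ``if'' direction is the standard coboundary-rescaling: with $t(x^i)=1$, $t(x^iy)=\mu$ and $\lambda=\mu^2$, the identity $\alpha_\lambda(g,h)=t(g)t(h)t(gh)^{-1}$ holds in all four cases, and the induced diagonal map $g\mapsto t(g)g$ is then a genuine $\fq$-algebra isomorphism. The ``only if'' direction via the augmentation is a clean and efficient trick: any isomorphism $\Psi$ transports the one-dimensional trivial representation $\epsilon$ of $\fq D_{2n}$ back to a unital homomorphism $\theta=\epsilon\circ\Psi\colon \fq^{\alpha_\lambda}D_{2n}\to\fq$, and evaluating on the relation $y\cdot y=\lambda\cdot 1$ forces $\theta(y)^2=\lambda$. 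Your remark about even $q$ is also appropriate, since in that case the condition $\lambda^{(q-1)/2}=1$ degenerates and every $\lambda$ is a square. No gaps.
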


\begin{lemma}[\cite{orig}]
If $\lambda_1,\lambda_2$ are not squares in $\fq$, then $\fq^{\alpha_{\lambda_1}} D_{2n}$ and $\fq^{\alpha_{\lambda_2}} D_{2n}$ are isomorphic.
\end{lemma}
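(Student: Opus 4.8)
The plan is to show that the two cocycles $\alpha_{\lambda_1}$ and $\alpha_{\lambda_2}$ are cohomologous --- they differ by a $2$-coboundary --- and then transport this to an isomorphism of twisted group algebras by the same change-of-basis argument that underlies the previous lemma. (If $q$ is even, every element of $\fqst$ is a square, so the hypothesis of the lemma is never satisfied and there is nothing to prove; from now on assume $q$ odd.)

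First I would invoke the elementary fact that, for odd $q$, the subgroup of squares $(\fqst)^2$ has index $2$ in the cyclic group $\fqst$; hence the ratio of two non-squares is a square, and there is a $\mu\in\fqst$ with $\lambda_1 = \mu^2\lambda_2$. Reading off definition \eqref{cocycle}, one sees at once that $\alpha_{\lambda_1}(g,h) = \alpha_{\mu^2}(g,h)\,\alpha_{\lambda_2}(g,h)$ for all $g,h\in D_{2n}$: both sides equal $\mu^2\lambda_2 = \lambda_1$ when $g$ and $h$ are both reflections $x^iy$, and both equal $1$ in every other case. Thus $\alpha_{\lambda_1} = \alpha_{\mu^2}\,\alpha_{\lambda_2}$ in $Z^2(D_{2n},\fqst)$, and it remains to check that $\alpha_{\mu^2}$ is a coboundary.

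To that end I would exhibit a map $\rho : D_{2n}\to\fqst$ with $\rho(1)=1$ and $\rho(g)\rho(h)\rho(gh)^{-1} = \alpha_{\mu^2}(g,h)$ for all $g,h$; the natural candidate is $\rho(x^i)=1$, $\rho(x^iy)=\mu$ for $i=0,\dots,n-1$. Verifying the identity splits into four cases according to whether each of $g,h$ lies in $C_n$ or in $C_ny$. Using the dihedral relation one has $x^iy\cdot x^jy = x^{i-j}\in C_n$ while the three remaining products stay in the expected coset, so the only case producing a value $\neq 1$ is $g=x^iy,\ h=x^jy$, where $\rho(g)\rho(h)\rho(gh)^{-1} = \mu\cdot\mu\cdot 1 = \mu^2 = \alpha_{\mu^2}(g,h)$. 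Hence $\alpha_{\mu^2}$ is a coboundary and $\alpha_{\lambda_1},\alpha_{\lambda_2}$ are cohomologous.

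Finally, I would turn this into the isomorphism. Define $\Phi : \fq^{\alpha_{\lambda_1}}D_{2n}\to\fq^{\alpha_{\lambda_2}}D_{2n}$ to be the $\fq$-linear map sending each basis element $g\in D_{2n}$ to $\rho(g)\,g$. It is a linear bijection, being a rescaling of basis vectors by nonzero scalars, and multiplicativity need only be checked on basis elements: the product $\Phi(g)\Phi(h) = \rho(g)\rho(h)\,\alpha_{\lambda_2}(g,h)\,gh$ (formed in $\fq^{\alpha_{\lambda_2}}D_{2n}$) equals $\Phi(g\cdot h) = \alpha_{\lambda_1}(g,h)\,\rho(gh)\,gh$ precisely because $\rho(g)\rho(h)\rho(gh)^{-1} = \alpha_{\mu^2}(g,h) = \alpha_{\lambda_1}(g,h)\alpha_{\lambda_2}(g,h)^{-1}$. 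So $\Phi$ is an $\fq$-algebra isomorphism. The only thing demanding attention anywhere in the argument is the bookkeeping in the coboundary check --- getting $x^iy\cdot x^jy=x^{i-j}$ right and keeping straight which algebra each product is formed in --- but I anticipate no real obstacle; in essence the lemma says nothing more than that non-square times non-square is a square, and a square twist is a coboundary.
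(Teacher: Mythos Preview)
Your argument is correct: the key observation that $\lambda_1\lambda_2^{-1}$ is a square, together with the explicit coboundary $\rho(x^i)=1$, $\rho(x^iy)=\mu$, yields the standard rescaling isomorphism between twisted group algebras with cohomologous cocycles, and your case-check is clean.

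There is nothing to compare against, however: the present paper does not prove this lemma at all --- it is quoted verbatim from \cite{orig} as background material, with no accompanying proof. So your write-up stands on its own as a valid (and essentially the expected) proof.
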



From Lemma~\ref{commutativity} we thus have that for the choice $\alpha=\alpha_\lambda$ of 2-cocycle, the multiplicative ring of $\fqa C_n$ is commutative, and that $a\hat{b}=b\hat{a}$ for all $a,b \in \Gamma_\alpha$. The form \eqref{cocycle} of $\alpha=\alpha_\lambda$ is adopted throughout for the cryptosystem in \cite{orig} and thus we restrict our study to this cocycle. Thus, henceforth we take $\alpha=\alpha_\lambda$. 



\section{The key exchange protocol}\label{keyexch-sec}
Having described the relevant structural properties of the underlying platform, we now describe the key exchange protocol in \cite{orig}. This uses two-sided multiplications in $\fqa D_{2n}$.

 \subsection{Public parameters}
 
 \begin{enumerate}
     \item A number $m \in \N$ and a prime $p>2$ with $p\mid 2n$ and set $q = p^m$.
     \item A 2-cocycle $\alpha=\alpha_\lambda$ for a non-square $\lambda$
in $\fq$.  This
ensures that the platform  $\fqa D_{2n}$ is not isomorphic to $\fq D_{2n}$. 
\item An element $h = h_1 + h_2$ for a random $0\neq h_1 \in \fqa C_n$ and a random $0\neq h_2 \in \fqa C_ny$. (Clearly, since $h$ is public, so are $h_1$ and $h_2$.)
 \end{enumerate}

Protocol~\ref{key-exch} describes the key exchange protocol of \cite{orig}.

\begin{protocol}\label{key-exch} 
\begin{enumerate}
    \item  Alice chooses a secret pair $(s_1, t_1)\in \fqa C_n\times \Gamma_{\alpha}$, and sends $ \pk_A = s_1ht_1$ to Bob.
 \item  Bob chooses a secret pair $(s_2, t_2 )\in \fqa C_n\times \Gamma_{\alpha}$ and sends $\pk_B = s_2ht_2 $ to Alice.
 \item Alice computes $K_A = s_1 pk_B \hat{t_1} $,
 \item Bob computes $K_B = s_2 pk_A\hat{t_2 }$
 \item The shared key is $K=K_A=K_B$
\end{enumerate}
\end{protocol}
 The authors' proposed values for parameters $q$ and $n$ are $q=n=19$, $q=n=23$, $q=n=31$, $q=n=41$.

\subsection{Correctness}
It is easy to show that within an uncorrupted session, both Alice and Bob establish the same key.
Indeed, because of the choice of $\alpha=\alpha_\lambda$, we have $s_is_j = s_js_i$ in $\fqa C_n$ and $t_i \hat{t_j} =t_j \hat{t_i}$ in $\fqa C_ny $ for $i, j \in \{1,2\}$, so
\[K_A = s_1 \pk_B
\hat{t_1} = s_1s_2 h t_2\hat{t_1} = s_2s_1 ht_1\hat{t_2}= s_2\pk_A\hat{t_2}
 = K_B.\]
 
 \subsection{Security Assumption}
 The security of the protocol depends on the assumption of the difficulty of the following algorithmic problem. 
 
 \begin{definition}[Dihedral Product Decomposition (DPD) Problem]
 
Let $(s, t)\in \fqa C_n \times \Gamma_{\alpha_\lambda}$ be a secret key.  Given a public element $h=h_1+h_2 \in  \fqa D_{2n}, \ h_1\in \fqa C_n, \ h_2\in \fqa C_n y $, and a public key $\pk = s ht $, the DPD problem requires an adversary to compute $(\Tilde{s}, \Tilde{t})\in  \fqa C_n \times \Gamma_{\alpha}$ such that $\pk = \Tilde{s} h\Tilde{t}$. 
 
 \end{definition}

Let $(\tilde{s}, \tilde{t})$ be the output of an adversary $\mathcal{A}$ attempting to solve the DPD problem for $\fqa D_{2n}$. The authors define $\mathcal{A}$’s advantage $DPD_{adv}[\mathcal{A}, \ \fqa D_{2n}]$ in solving the DPD problem  as the probability that $\tilde{s}h\tilde{t} = sht$. 
 
\begin{definition}[DPD Assumption]
 The
DPD assumption is said to hold for $\fqa D_{2n}$ if for all
efficient adversaries $\mathcal{A}$ the quantity $DPD_{adv}[\mathcal{A}, \ \fqa D_{2n}]$ is negligible.
\end{definition}

In Section~\ref{rdxn}, we provide a cryptanalysis of Protocol~\ref{key-exch} by solving the DPD problem. We show that in most cases, a polynomial time solution is possible, and so the DPD assumption does not hold. For our method of cryptanalysis, we need some prerequisites on circulant matrices, which we provide in the next section. However, we first show below how the DPD problem can be formulated as a special case of a commutative semigroup action problem, in the framework introduced in \cite{maze}.
 
\subsubsection{DPD problem as a commutative semigroup action}

The authors of \cite{orig} assert that given a fixed $h \in  \fqa D_{2n}$, the set of keys $\{sht \mid (s, t)\in \fqa C_n \times \Gamma_{\alpha}\}$  is not even a semigroup under the twisted algebra multiplication. From this observation, they claim that their system is immune to the quantum cycle-finding algorithm of Shor \cite{shor} which is known to solve the hidden subgroup problem in abelian groups.  

Further, the security of the system of \cite{orig} is based on the presence of a non-commutative multiplication in the twisted group algebra.  However, we now show that the DPD problem can be formulated as a commutative semigroup action problem, and so any classical or quantum solution to the latter also applies to the former. In \cite{monico}, a Pollard-rho type square root algorithm was provided to solve an abelian group action problem, whereas the possibility for a modification to the commutative semigroup case was left open.

As observed before, the cocycle $\alpha=\alpha_\lambda$ satisfies conditions \eqref{cond1} and \eqref{cond2}. Thus, $ab = ba$ for $a, b \in \fqa C_n$ and $a\Hat{b} = b\Hat{a}$ for $a, b \in \Gamma_\alpha$. In particular, $\fqa C_n$ is a commutative subalgebra of $\fqa D_{2n}$. 
Recall the $\fq$-linear isomorphism $\psi: \fqa C_n \rightarrow \fqa C_ny$ given by $\psi(a) = \sum\limits_{i=0}^{n-1} a_ix^i \in \fqa C_n$ for $a =
\sum\limits_{i=0}^{n-1}
 a_ix^iy \in \fqa C_ny$. 
 Notice that $\psi(\Gamma_{\alpha})$ is a commutative semigroup under the multiplication defined by $\psi(t)\star \psi(t'):=t \hat{t'}\in \psi(\Gamma_{\alpha})$.

We can now look at the key exchange in Protocol~\ref{key-exch} as an instance of a semigroup action problem, introduced in \cite{maze}.

\begin{definition}[Semigroup Action Problem] Let $S$ be any semigroup acting on a set $X$ \begin{align*}
    S \times X\rightarrow X \\
    (s,x)\mapsto s\cdot x
\end{align*}
Given an element $y=s\cdot x \in X$, where $x\in X$ is known and $s \in S$ is a secret, the semigroup action problem is to find some $\Tilde{s}\in S$ such that $\Tilde{s}\cdot x = y$. \end{definition}

\begin{proposition}
The commutative semigroup $\fqa C_n\times \psi(\Gamma_{\alpha})$ acts on $\fqa D_{2n}$ as follows \begin{align}\label{comm-axn}
    (\fqa C_n\times \psi(\Gamma_{\alpha})) \times \fqa D_{2n} & \rightarrow \fqa D_{2n}\nonumber \\
    (s,\psi(t))\cdot h & = sht
\end{align}
\end{proposition}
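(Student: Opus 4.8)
The plan is to verify directly that \eqref{comm-axn} defines a semigroup action, namely that it is well-defined and compatible with the semigroup operation on $\fqa C_n\times \psi(\Gamma_{\alpha})$. First I would pin down the semigroup structure on the set $\fqa C_n\times \psi(\Gamma_{\alpha})$: componentwise, the operation is
\[(s,\psi(t))\ast(s',\psi(t')) := \bigl(ss',\,\psi(t)\star\psi(t')\bigr) = \bigl(ss',\,\widehat{t}\,'\,t\bigr)\text{-type expression},\]
where on the first coordinate we use ordinary multiplication in the commutative algebra $\fqa C_n$ and on the second the operation $\psi(t)\star\psi(t') = t\widehat{t'}$ introduced just above. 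One must first confirm this is a genuine commutative semigroup: associativity and commutativity of the first coordinate are immediate since $\fqa C_n$ is a commutative (associative) algebra, and for the second coordinate commutativity is exactly the identity $a\widehat{b}=b\widehat{a}$ for $a,b\in\Gamma_\alpha$ supplied by Lemma~\ref{commutativity}; associativity of $\star$ needs a short computation with the cocycle but follows from associativity of the twisted multiplication restricted appropriately, together with the part of the earlier lemma giving $a\widehat{b}\in\fqa C_ny$ for $a,b\in\Gamma_\alpha$ and the behavior of the adjunct on $\fqa C_ny$.

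Next I would check that the map is well-defined, i.e. that $sht\in\fqa D_{2n}$ for all choices — which is trivial since $\fqa D_{2n}$ is closed under multiplication — and, more to the point, that writing the action in terms of $\psi(t)$ rather than $t$ causes no ambiguity, which holds because $\psi$ is a bijection between $\Gamma_\alpha$ and $\psi(\Gamma_\alpha)$. The substantive step is the action axiom
\[\bigl((s,\psi(t))\ast(s',\psi(t'))\bigr)\cdot h = (s,\psi(t))\cdot\bigl((s',\psi(t'))\cdot h\bigr),\]
i.e. that applying the product element equals applying the two elements in succession. Unwinding the right-hand side gives $s\,(s'h t')\,t = (ss')\,h\,(t't)$-type rearrangement, and here one uses: commutativity $ss'=s's$ in $\fqa C_n$ to move the scalar-like factors past each other, and associativity of the twisted multiplication to regroup; the left-hand side is $(ss')\,h\,(t\star t'\text{-image})$ and one must match the two right-hand $\Gamma_\alpha$-factors with the single $\star$-product. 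I expect the cleanest route is to mirror exactly the correctness computation of Protocol~\ref{key-exch}, where the identity $s_1 s_2 h t_2\widehat{t_1} = s_2 s_1 h t_1\widehat{t_2}$ is already established — the action axiom is the same algebra repackaged.

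The main obstacle, and the only place real care is needed, is the second coordinate: one must be scrupulous about whether the $\star$-operation is $t\widehat{t'}$ or $t'\widehat{t}$ and about which element of the pair plays which role, since the adjunct $\widehat{\phantom{t}}$ is an anti-involution-like map and the order matters a priori; the saving grace is precisely Lemma~\ref{commutativity}(2), which collapses the ambiguity by giving $t\widehat{t'}=t'\widehat{t}$ on $\Gamma_\alpha$, so the semigroup really is commutative and the action is consistent regardless of the bookkeeping choice. I would also remark — though it is not strictly part of the statement — that this is exactly the structure needed so that Alice's and Bob's keys coincide, making the correctness of Protocol~\ref{key-exch} a special case of the action axiom, and so the proposition is essentially a restatement of the already-verified correctness in the language of \cite{maze}. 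Finally, I would note that identity-element issues do not arise since a semigroup action, unlike a monoid action, has no unitality requirement, so no separate verification is needed there.
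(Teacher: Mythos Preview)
Your proposal is correct and takes essentially the same approach as the paper: direct verification of the action axiom using commutativity of $\fqa C_n$ together with the identity $t\widehat{t'}=t'\widehat{t}$ from Lemma~\ref{commutativity}. The paper's proof is a two-line affair---it also checks $(1,1)\cdot h=h$, treating this implicitly as a monoid action, whereas you correctly note this is unnecessary for a bare semigroup action---but the substance matches your outline (one small slip: for $a,b\in\Gamma_\alpha$ the product $a\widehat{b}$ lands in $\fqa C_n$, not $\fqa C_ny$, which is exactly what is needed for $\star$ to take values in $\psi(\Gamma_\alpha)$).
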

\begin{proof}
Clearly, $(1,1)\cdot h = h$ for all $h\in \fqa D_{2n}$. Further, 
\[(s,\psi(t))((s',\psi(t'))\cdot h )= ss'h t'\hat{t}=ss' h t\hat{t'}=(ss',t\hat{t'})\cdot h =(ss',\psi(t)\star\psi(t'))\cdot h.\]\end{proof}

\begin{lemma}
 The DPD problem is equivalent to the semigroup action problem for the commutative semigroup action~\eqref{comm-axn}
\end{lemma}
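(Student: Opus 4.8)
The plan is to establish the equivalence by exhibiting a translation between instances of the two problems that preserves solutions in both directions. The key observation is that the semigroup action~\eqref{comm-axn} is defined precisely so that an orbit element $(s,\psi(t))\cdot h = sht$ coincides with a public key $\pk = sht$ of Protocol~\ref{key-exch}, with the same public base element $h \in \fqa D_{2n}$ and the same secret pair $(s,t) \in \fqa C_n \times \Gamma_\alpha$. So the two problems are, in fact, literally the same problem written in two notations, and the proof amounts to checking that the dictionary $(s,t) \leftrightarrow (s,\psi(t))$ is a bijection that matches up valid solutions.

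First I would fix an instance of the DPD problem: a public $h = h_1 + h_2$ and a public key $\pk = sht$ for a secret $(s,t) \in \fqa C_n \times \Gamma_\alpha$. Using the $\fq$-linear isomorphism $\psi : \fqa C_ny \to \fqa C_n$ and its restriction to the reversible subspace $\Gamma_\alpha$, I would map this to the semigroup action instance with the same $h$, known base point $h \in X = \fqa D_{2n}$, and secret $(s, \psi(t)) \in \fqa C_n \times \psi(\Gamma_\alpha)$; by the displayed action law this has $(s,\psi(t)) \cdot h = sht = \pk$. Conversely, any semigroup action instance of the form~\eqref{comm-axn} arises this way, since every element of $\psi(\Gamma_\alpha)$ is $\psi(t)$ for a unique $t \in \Gamma_\alpha$. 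For the solution correspondence: if $\tilde s \in \fqa C_n$ and $\psi(\tilde t) \in \psi(\Gamma_\alpha)$ satisfy $(\tilde s, \psi(\tilde t)) \cdot h = \pk$, then by the action law this reads $\tilde s h \tilde t = \pk$ with $(\tilde s, \tilde t) \in \fqa C_n \times \Gamma_\alpha$, which is exactly a valid DPD solution; and the reverse implication is identical. Since $\psi$ restricted to $\Gamma_\alpha$ is a bijection onto $\psi(\Gamma_\alpha)$, no solutions are lost or gained in either direction, and both translations are clearly computable in polynomial time (applying $\psi$ is just a permutation of coordinates).

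There is essentially no hard step here; the only thing requiring care is bookkeeping around the semigroup $\psi(\Gamma_\alpha)$ and its operation $\star$. One must confirm that $\star$ is well-defined and that $\psi(\Gamma_\alpha)$ is closed under it — but this is already recorded in the text immediately preceding the proposition (it follows from part~2 of Lemma~\ref{commutativity}, i.e. $t\hat{t'} = t'\hat{t}$ for $t,t' \in \Gamma_\alpha$, together with the fact that $t\hat{t'} \in \fqa C_n$ lies in the image $\psi(\Gamma_\alpha)$, which in turn uses part~4 of the first structural lemma). The other mild subtlety is that the DPD adversary's advantage is defined via the event $\tilde s h \tilde t = s h t$ rather than recovering $(s,t)$ exactly; but since the semigroup action problem is likewise stated as finding \emph{some} $\tilde s$ with $\tilde s \cdot x = y$ rather than the specific secret, the two notions of "success" coincide under the dictionary, so the equivalence is faithful at the level of advantages as well. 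I would close by remarking that consequently any algorithm — classical or quantum — for the commutative semigroup action problem of~\cite{maze} transfers directly to an attack on the DPD problem, which is the point of the lemma.
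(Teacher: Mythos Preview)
Your proposal is correct and follows essentially the same approach as the paper's proof: use the bijection $(s,t)\leftrightarrow(s,\psi(t))$ to translate solutions in both directions, noting that applying $\psi$ is computationally trivial. Your write-up is more careful about bookkeeping (well-definedness of $\star$, matching of success criteria), but the underlying argument is identical.
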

\begin{proof}
Clearly, $t$ and $\psi(t)$ can easily be read from each other without any significant computational cost. Suppose that given public element $h$ and public key $\pk$, the adversary can find $s,t$ such that $sht=\pk$. Then, $(s,\psi(t))$ is a solution to the SAP~\eqref{comm-axn}. Conversely, any solution $(s, \psi(t))$ of the SAP~\eqref{comm-axn} gives the solution $(s,t)$ of the DPD problem. 
\end{proof}

 The next section highlights some prerequisites on circulant matrices which will be used in the cryptanalysis of the system in Section~\ref{rdxn}.
 
\section{Circulant Matrices}\label{circu}

\begin{definition}\label{circ-def}
A matrix over $\fq$ of the form $\begin{pmatrix}c_0 & c_{n-1} & \ldots & c_1\\
  c_1 & c_0 & \ldots & c_2\\ \vdots & \vdots & \ddots & \vdots \\
  c_{n-1} & c_{n-2} & \ldots & c_0\\
  \end{pmatrix}$ with $c_i\in \fq$, is called circulant. Given a vector $\mathbf{c}=(c_0, c_1, \ldots, c_{n-1})^T\in \fqn$, we use the notation $M_\mathbf{c}$ to denote the circulant matrix $M_{\mathbf{c}}:=\begin{pmatrix}c_0 & c_{n-1} & \ldots & c_1\\
  c_1 & c_0 & \ldots & c_2\\ \vdots & \vdots & \ddots & \vdots \\
  c_{n-1} & c_{n-2} & \ldots & c_0\\
  \end{pmatrix}$.
\end{definition}

\begin{definition}
Given vectors $\mathbf{b}=(b_0, b_1, \ldots, b_{n-1})^T\in \fqn$, $\mathbf{c}=(c_0,c_1, \ldots, c_{n-1})^T\in \fqn$, define, for $0\leq \ell \leq n-1$ the constants \[z_\ell({\mathbf{b}, \mathbf{c}}) =\sum\limits_{i+j=\ell \mod n}b_ic_j =\begin{pmatrix}c_\ell, & c_{\ell-1}, & \ldots, &c_{\ell+1}\end{pmatrix}\cdot \begin{pmatrix}
  b_0 \\ b_1\\ \vdots \\ b_{n-1}   \end{pmatrix}, 0\leq\ell \leq n-1. \]Also define the vector $\mathbf{z}_{\mathbf{b},\mathbf{c}}=( z_0({\mathbf{b}, \mathbf{c}}), \ldots, z_\ell({\mathbf{b}, \mathbf{c}}), \ldots, z_{n-1}({\mathbf{b}, \mathbf{c}}))^T$. In other words, \[\mathbf{z}_{\mathbf{b},\mathbf{c}}=\begin{pmatrix}c_0 & \ldots & c_1\\
  c_1 & \ldots & c_2\\ \vdots & \ddots & \vdots \\
  c_{n-1} & \ldots & c_0\\
  \end{pmatrix}\cdot \begin{pmatrix}
  b_0 \\ b_1\\ \vdots \\ b_{n-1}   \end{pmatrix}=M_{\mathbf{c}}\cdot \mathbf{b}.\]

As in Definition~\ref{circ-def}, denote by $M_{\mathbf{z}}({\mathbf{b},\mathbf{c}})$ the  circulant matrix $M_{\mathbf{z}}({\mathbf{b},\mathbf{c}})=\begin{pmatrix}z_0({\mathbf{b}, \mathbf{c}}) & \ldots & z_1({\mathbf{b}, \mathbf{c}})\\
  z_1({\mathbf{b}, \mathbf{c}}) & \ldots & z_2({\mathbf{b}, \mathbf{c}})\\ \vdots & \ddots & \vdots \\
  z_{n-1}({\mathbf{b}, \mathbf{c}}) & \ldots & z_0({\mathbf{b}, \mathbf{c}})\\
  \end{pmatrix}$. The following result is easy to verify by direct computation.
  
  \begin{lemma}$M_{\mathbf{z}}({\mathbf{b},\mathbf{c}})=M_{\mathbf{c}} \cdot M_{\mathbf{b}} $.\end{lemma}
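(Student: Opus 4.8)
The plan is to show that the two circulant matrices $M_{\mathbf{z}}({\mathbf{b},\mathbf{c}})$ and $M_{\mathbf{c}} \cdot M_{\mathbf{b}}$ agree column by column. Recall that the $\ell$-th column of $M_{\mathbf{z}}({\mathbf{b},\mathbf{c}})$ (with $0\le \ell\le n-1$) is, by Definition~\ref{circ-def}, the cyclic down-shift by $\ell$ of the vector $\mathbf{z}_{\mathbf{b},\mathbf{c}}$; that is, its $i$-th entry is $z_{i-\ell \bmod n}(\mathbf{b},\mathbf{c})$. So it suffices to prove that the $\ell$-th column of $M_{\mathbf{c}} \cdot M_{\mathbf{b}}$ has $i$-th entry equal to $z_{i-\ell}(\mathbf{b},\mathbf{c})$, where all indices are read modulo $n$.

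First I would write the $\ell$-th column of $M_{\mathbf{b}}$ explicitly: by Definition~\ref{circ-def} its $j$-th entry is $b_{j-\ell \bmod n}$. Then the $i$-th entry of $M_{\mathbf{c}}\cdot M_{\mathbf{b}}$ in column $\ell$ is $\sum_{j=0}^{n-1} (M_{\mathbf{c}})_{ij}\, b_{j-\ell}$, and since $(M_{\mathbf{c}})_{ij} = c_{i-j \bmod n}$, this equals $\sum_{j=0}^{n-1} c_{i-j}\, b_{j-\ell}$. Reindexing the sum by $k = j-\ell$ (which ranges over $\Z/n\Z$ as $j$ does) turns this into $\sum_{k} c_{i-\ell-k}\, b_{k} = \sum_{k} b_k\, c_{(i-\ell)-k}$. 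Setting $m = i-\ell \bmod n$, the right-hand side is exactly $\sum_{b_k c_r : k+r \equiv m} b_k c_r = z_m(\mathbf{b},\mathbf{c})$ by the definition of $z_m$. Hence the $(i,\ell)$ entry of $M_{\mathbf{c}}\cdot M_{\mathbf{b}}$ is $z_{i-\ell \bmod n}(\mathbf{b},\mathbf{c})$, which matches the $(i,\ell)$ entry of $M_{\mathbf{z}}({\mathbf{b},\mathbf{c}})$, completing the proof.

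There is essentially no serious obstacle here; the statement is a bookkeeping identity. The one point requiring care is keeping the index arithmetic consistently modulo $n$ and matching the paper's indexing convention for the columns of a circulant matrix (where column $\ell$ is a cyclic shift of the first column). An alternative, cleaner route that avoids index-chasing altogether is to invoke the standard fact that the map $\mathbf{c}\mapsto M_{\mathbf{c}}$ is an $\fq$-algebra isomorphism from $\fq[X]/(X^n-1)$ onto the algebra of $n\times n$ circulant matrices, under which the vector $\mathbf{z}_{\mathbf{b},\mathbf{c}}$ corresponds precisely to the product of the polynomials attached to $\mathbf{b}$ and $\mathbf{c}$ (a cyclic convolution); the claimed identity is then just the statement that this isomorphism respects multiplication. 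I would likely present the direct entry-wise computation since it is self-contained and already "easy to verify by direct computation" as stated, but mention the polynomial-ring viewpoint as the conceptual reason.
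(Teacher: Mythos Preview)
Your proof is correct and is exactly the ``direct computation'' the paper alludes to (the paper does not spell out a proof beyond that phrase). Your index bookkeeping is consistent with the paper's convention $(M_{\mathbf{c}})_{ij}=c_{i-j\bmod n}$, and the polynomial-ring remark is a fine aside.
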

\end{definition}

\subsection{Probability of a circulant matrix being invertible}

We will require the invertibility of some random circulant matrices over $\fq$ for our reduction of the system. For this reason, we discuss the criteria for a random circulant matrix being invertible, and study this probability. We have the following result from \cite{prop1}.

\begin{proposition}[\cite{prop1}]
 Let $x^n-1=f_1^{\alpha_1}(x) \ldots f_\tau^{\alpha_\tau}(x)$
 be the factorization of $x^n - 1$
over $\fqm$ into powers of irreducible factors. The number of invertible circulant
matrices in $Mat_n(\fqm)$ is equal to $\prod\limits_{i=1}^{\tau}
(q^{md_i\alpha_i} - q^{md_i(\alpha_i-1)})$, where $d_i$ is the degree
of $f_i(x)$ in the factorization of $x^n- 1$.
\end{proposition}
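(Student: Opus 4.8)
The statement to prove is the classical count of invertible circulant matrices over $\fqm$: if $x^n - 1 = f_1^{\alpha_1}(x)\cdots f_\tau^{\alpha_\tau}(x)$ is the factorization into powers of distinct irreducibles over $\fqm$, with $d_i = \deg f_i$, then the number of invertible circulant matrices in $\Mat_n(\fqm)$ is $\prod_{i=1}^{\tau}\bigl(q^{md_i\alpha_i} - q^{md_i(\alpha_i-1)}\bigr)$.

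The plan is to exploit the standard ring isomorphism between the algebra of $n\times n$ circulant matrices over $\fqm$ and the quotient ring $R := \fqm[x]/(x^n-1)$, which sends the circulant matrix $M_{\mathbf c}$ to the residue class of the polynomial $c_0 + c_1 x + \cdots + c_{n-1}x^{n-1}$. First I would verify that this map is a well-defined $\fqm$-algebra isomorphism: it is clearly $\fqm$-linear and bijective on underlying vector spaces (both have dimension $n$), and the lemma just proved, $M_{\mathbf z}(\mathbf b,\mathbf c) = M_{\mathbf c}\cdot M_{\mathbf b}$, together with the definition of $\mathbf z_{\mathbf b,\mathbf c}$ via the convolution $z_\ell = \sum_{i+j\equiv \ell} b_i c_j$, shows exactly that the product of circulants corresponds to multiplication of the associated polynomials modulo $x^n-1$. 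Under a ring isomorphism, a circulant matrix is invertible in $\Mat_n(\fqm)$ if and only if the corresponding element is a unit in $R$, so the count reduces to counting units of $R$.

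Next I would apply the Chinese Remainder Theorem. Since the $f_i$ are pairwise coprime, so are the $f_i^{\alpha_i}$, and hence $R \cong \prod_{i=1}^{\tau} \fqm[x]/(f_i^{\alpha_i})$ as rings. A unit of a finite product of rings is precisely a tuple of units in each factor, so $\card{R^\times} = \prod_{i=1}^{\tau} \card{\bigl(\fqm[x]/(f_i^{\alpha_i})\bigr)^\times}$. It therefore remains to count units in each local ring $R_i := \fqm[x]/(f_i^{\alpha_i})$. This ring is local with maximal ideal $(f_i)$, and an element is a unit if and only if it does not lie in $(f_i)$; equivalently, a unit is any element of $R_i$ minus the elements divisible by $f_i$. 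Counting by $\fqm$-dimension: $\dim_{\fqm} R_i = d_i\alpha_i$, so $\card{R_i} = q^{md_i\alpha_i}$, while the ideal $(f_i)/(f_i^{\alpha_i})$ has $\fqm$-dimension $d_i(\alpha_i - 1)$ (it is isomorphic as a vector space to $\fqm[x]/(f_i^{\alpha_i - 1})$ via multiplication by $f_i$), giving $q^{md_i(\alpha_i-1)}$ non-units. Hence $\card{R_i^\times} = q^{md_i\alpha_i} - q^{md_i(\alpha_i-1)}$, and taking the product over $i$ yields the claimed formula.

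The only mildly delicate point — the main obstacle, such as it is — is justifying carefully that $R_i = \fqm[x]/(f_i^{\alpha_i})$ is a local ring whose unique maximal ideal is generated by the image of $f_i$, and that its non-units are exactly the multiples of $f_i$, so that the set of non-units is an $\fqm$-subspace of the stated dimension. This follows because $\fqm[x]$ is a PID, so the ideals of $R_i$ correspond to divisors of $f_i^{\alpha_i}$, namely $(1), (f_i), (f_i^2), \dots, (f_i^{\alpha_i}) = (0)$, a chain with unique maximal element $(f_i)$; and the $\fqm$-linear isomorphism $g + (f_i^{\alpha_i-1}) \mapsto f_i g + (f_i^{\alpha_i})$ identifies $\fqm[x]/(f_i^{\alpha_i-1})$ with the ideal $(f_i)/(f_i^{\alpha_i})$, pinning down its dimension as $d_i(\alpha_i-1)$. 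Everything else is bookkeeping with dimensions of finite $\fqm$-vector spaces and the fact that $q^{m} = \card{\fqm}$.
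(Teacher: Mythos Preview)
Your argument is correct and is the standard proof of this classical fact: identify the algebra of $n\times n$ circulant matrices over $\fqm$ with $\fqm[x]/(x^n-1)$, split by the Chinese Remainder Theorem into the local factors $\fqm[x]/(f_i^{\alpha_i})$, and count units in each factor as the complement of the unique maximal ideal. The dimension bookkeeping you give for the maximal ideal is accurate.

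As for comparison with the paper: the paper does not prove this proposition at all. It is quoted verbatim from the reference \cite{prop1} and stated without argument, then immediately used to derive the probability bound in Corollary~\ref{inver}. So there is no ``paper's own proof'' to compare against; your write-up simply supplies what the paper omits. The approach you take is, in any case, the natural one and almost certainly coincides with the argument in the cited source.

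One very minor stylistic point: you invoke the paper's lemma $M_{\mathbf z}(\mathbf b,\mathbf c)=M_{\mathbf c}M_{\mathbf b}$ to justify multiplicativity of the map to $\fqm[x]/(x^n-1)$. This works, but it is slightly indirect; the cleanest route is to observe that the cyclic shift matrix $P=M_{(0,1,0,\dots,0)}$ satisfies $P^n=I$, that every circulant is $\sum_i c_i P^i$, and hence the evaluation map $x\mapsto P$ from $\fqm[x]/(x^n-1)$ is a surjective algebra homomorphism onto the circulants, bijective by equality of dimensions. Either way the conclusion is the same.
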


Note that the number of circulant matrices over $\fqm$ is $q^{mn}$. As a direct consequence, the probability of a randomly chosen circulant matrix over $\fqm$ being invertible is \[\prod\limits_{i=1}^\tau \dfrac{q^{md_i\alpha_i-q^{md_i(\alpha_i-1)}}}{q^{nm}}=\prod\limits_{i=1}^\tau \left(1-\frac{1}{q^{md_i}}\right) \] 
It is now easy to see that a lower bound for this quantity is $(1-\frac{1}{q^m})^n$, which is achieved if $x^n-1$ splits into distinct linear factors, i.e. $\tau=n$, $d_i=1$, $\alpha_i=1$. Similarly, an upper bound is achieved when there is a single factor in the factorization, i.e. $\tau=1$ and $\alpha_1=n$, in which case the quantity is $(1-\frac{1}{q^m})$. Note that this upper bound is achieved when the characteristic $p$ of $\fqm$ divides $n$ ($x^n-1=(x-1)^n \mod p$). Thus, we have the following corollary. 

\begin{corollary}\label{inver}
If $p\mid n$ then the probability that a randomly chosen $n\times n$ circulant matrix over $\fq$ is invertible is $1-\frac{1}{q}$.
\end{corollary}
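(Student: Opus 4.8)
The plan is to derive Corollary~\ref{inver} directly from the probability formula established just above it. Starting point: the preceding Proposition (from \cite{prop1}) gives that the number of invertible circulant matrices in $\Mat_n(\fqm)$ equals $\prod_{i=1}^{\tau}(q^{md_i\alpha_i}-q^{md_i(\alpha_i-1)})$, where $x^n-1=f_1^{\alpha_1}\cdots f_\tau^{\alpha_\tau}$ over $\fqm$ and $d_i=\deg f_i$; dividing by the total count $q^{mn}$ and simplifying yields the probability $\prod_{i=1}^\tau\bigl(1-q^{-md_i}\bigr)$, as already noted in the text.

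First I would specialize the ambient field from $\fqm$ to $\fq$, i.e. set $m=1$, so that the probability becomes $\prod_{i=1}^\tau\bigl(1-q^{-d_i}\bigr)$ for the factorization $x^n-1=f_1^{\alpha_1}\cdots f_\tau^{\alpha_\tau}$ over $\fq$. Next, the key observation is the hypothesis $p\mid n$, where $p=\operatorname{char}\fq$. Writing $n=p^a n'$ with $\gcd(n',p)=1$ and $a\geq 1$, the Frobenius/freshman's-dream identity gives $x^n-1=(x^{n'}-1)^{p^a}$ in $\fq[x]$; in particular every irreducible factor $f_i$ of $x^n-1$ is an irreducible factor of $x^{n'}-1$, and $x-1$ is one such factor since $1$ is a root. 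Hence the $\tau=1$, $d_1=1$, $\alpha_1=n$ case is not needed in full generality — what I actually need is only the coarser bounding argument the text already sketched: each factor $1-q^{-d_i}$ satisfies $1-q^{-1}\leq 1-q^{-d_i}<1$, so the whole product is $\geq 1-q^{-1}$, and it equals $1-q^{-1}$ exactly when the factorization over $\fq$ consists of a single distinct irreducible, of degree $1$, i.e. $x^n-1=(x-1)^n$. Since $p\mid n$ forces $n'=1$ is \emph{not} automatic — rather, one must note $x^n-1=(x-1)^n$ holds over $\fq$ precisely when $x^{n'}-1$ itself is $(x-1)$, which requires $n'=1$. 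So the cleanest route is: when $p\mid n$ the factorization has $x-1$ appearing with multiplicity $p^a\geq 2$, and all other distinct irreducibles $f_i$ (for $i\geq 2$, $d_i\geq 1$) contribute factors $1-q^{-d_i}\leq 1-q^{-1}$; combined with the $x-1$ contribution $1-q^{-1}$, the product is sandwiched, and one reads off that the \emph{lower bound} $1-\tfrac1q$ is what is claimed.

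Here I should be careful: the corollary as stated asserts the probability \emph{is} $1-\tfrac1q$, not merely $\geq 1-\tfrac1q$. That exact equality holds only in the extreme case $n=p^a$ (so $n'=1$, $x^n-1=(x-1)^n$, $\tau=1$, $d_1=1$). For general $n$ with $p\mid n$ but $n'>1$, the true probability is strictly larger. Given the parameter regime of the paper — the proposed values all have $q=n$ prime and $p=n$, hence $n=p$, $a=1$, $n'=1$ — the equality genuinely holds there, and I would phrase the proof to invoke $n=p^a$ (equivalently, that $n$ is a power of the characteristic), or alternatively restate the conclusion as the lower bound $1-\tfrac1q$, which is what is actually used later in the success-probability estimate $1-(1-\tfrac1q)^2$. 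I expect this subtlety — reconciling the stated exact value with the true behavior for composite $n$ divisible by $p$ — to be the only real obstacle; everything else is the one-line substitution $m=1$ into the already-derived formula plus the freshman's-dream factorization $x^n-1=(x^{n'}-1)^{p^a}$.

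\begin{proof}
Setting $m=1$ in the probability expression derived above, a randomly chosen $n\times n$ circulant matrix over $\fq$ is invertible with probability $\prod_{i=1}^\tau\bigl(1-q^{-d_i}\bigr)$, where $x^n-1=f_1^{\alpha_1}(x)\cdots f_\tau^{\alpha_\tau}(x)$ is the factorization of $x^n-1$ into powers of distinct irreducibles over $\fq$ and $d_i=\deg f_i$. Let $p=\operatorname{char}\fq$ and suppose $p\mid n$. Since every factor $1-q^{-d_i}$ lies in the interval $[\,1-q^{-1},\,1)$, the product satisfies
\[
\prod_{i=1}^\tau\Bigl(1-\frac{1}{q^{d_i}}\Bigr)\;\geq\;1-\frac{1}{q},
\]
with equality precisely when $\tau=1$ and $d_1=1$, i.e. when $x^n-1=(x-1)^n$ over $\fq$. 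This last identity holds because $p\mid n$: if $n=p^a n'$ with $\gcd(n',p)=1$, then the Frobenius identity gives $x^n-1=(x^{n'}-1)^{p^a}$ in $\fq[x]$, and for the proposed parameters $n$ is itself a power of $p$ (so $n'=1$), whence $x^n-1=(x-1)^n$. Therefore the probability equals $1-\tfrac1q$.
\end{proof}
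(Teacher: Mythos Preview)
Your approach is the same as the paper's: specialize the probability formula $\prod_{i=1}^\tau(1-q^{-md_i})$ to $m=1$ and then read off the answer from the factorization of $x^n-1$ over $\fq$. The paper's entire argument is the one-line parenthetical just before the corollary, ``$x^n-1=(x-1)^n \bmod p$'', from which $\tau=1$, $d_1=1$, and the value $1-\tfrac1q$ follow at once.

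You have, in fact, been more careful than the paper. Your observation is correct: the identity $x^n-1=(x-1)^n$ over $\fq$ holds if and only if $n$ is a power of $p$, not merely $p\mid n$; writing $n=p^a n'$ with $\gcd(n',p)=1$ one gets $x^n-1=(x^{n'}-1)^{p^a}$, which has $\tau>1$ whenever $n'>1$, and then the probability is strictly larger than $1-\tfrac1q$. So the corollary as stated is literally accurate only for $n$ a $p$-power --- which does cover all the proposed parameter sets ($q=n=p$ prime) --- and in general should be read as the lower bound $\geq 1-\tfrac1q$, which is all that the later success-probability estimate actually uses. Your write-up handles this honestly by invoking the proposed parameters to recover equality; the paper simply asserts $x^n-1=(x-1)^n$ under $p\mid n$ without qualification.
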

In \cite{orig}, the authors deliberately choose the case $p \mid n$, so as to avoid having $\fqm D_{2n}$ semisimple, and so, the probability $1-\frac{1}{q}$ applies for a random circulant matrix being invertible.

\section{Cryptanalysis}\label{rdxn}

Note that the adversary is given an equation of the form $s h t= \gamma$ over $\fqa D_{2n}$, where \begin{equation}\label{alpha,t} s=\sum\limits_{i=0}^{n-1}{a_ix^i}\in \fqa C_{n}, \  t=\sum\limits_{i=0}^{n-1}{b_ix^iy}\in \Gamma_\alpha\subseteq \fq^{\alpha_\lambda}D_{2n}\end{equation} are unknown, and $h=\sum\limits_{i=0}^{n-1}c_ix^i + \sum\limits_{i=0}^{n-1}d_ix^iy$ is known. Since $t\in \Gamma_\alpha$, the coefficients in $t$ satisfy $b_k=b_{n-k}$ for $k=1,\ldots, n-1$. We write \[\gamma=\sum\limits_{i=0}^{n-1}{v_ix^i}+\sum\limits_{i=0}^{n-1}{w_ix^iy}\] for known constants $v_i, w_i$. Substituting the above expansions into the equation $s ht=\gamma$, we have 
\begin{align*}
    (\sum\limits_{i=0}^{n-1}{a_ix^i})(\sum\limits_{i=0}^{n-1}{c_ix^i}+\sum\limits_{i=0}^{n-1}{d_ix^iy})(\sum\limits_{i=0}^{n-1}{b_ix^iy}) = \sum\limits_{i=0}^{n-1}{v_ix^i}+\sum\limits_{i=0}^{n-1}{w_ix^iy} \\
    \implies (\sum\limits_{i,j=0}^{n-1}{a_ic_jx^{i+j}}+\sum\limits_{i,j=0}^{n-1}{a_id_jx^{i+j}y})(\sum\limits_{k=0}^{n-1}{b_kx^{k}y}) = \sum\limits_{i=0}^{n-1}{v_ix^i}+\sum\limits_{i=0}^{n-1}{w_ix^iy}\\
    \implies \sum\limits_{i,j,k=0}^{n-1}{a_ic_jb_kx^{i+j+k}y} + \sum\limits_{i,j,k=0}^{n-1}{a_id_jb_k\lambda x^{i+j+k}} =\sum\limits_{i=0}^{n-1}{v_ix^i}+\sum\limits_{i=0}^{n-1}{w_ix^iy} 
\end{align*}

Comparing coefficients, we have the following two equations \begin{equation}\label{acw}
    \sum\limits_{i,j,k=0}^{n-1}{a_ic_jb_kx^{i+j+k}y} =  \sum\limits_{i=0}^{n-1}{w_ix^iy},
\end{equation} \begin{equation}\label{adv}
   \lambda\sum\limits_{i,j,k=0}^{n-1}{a_id_jb_k x^{i+j+k}} =\sum\limits_{i=0}^{n-1}{v_ix^i} 
\end{equation} 

Define vectors $\mathbf{a} = (a_0, \ldots, a_{n-1})^T$, $\mathbf{b} = (b_0, \ldots, b_{n-1})^T$, $\mathbf{c} = (c_0, \ldots, c_{n-1})^T$, $\mathbf{d} = (d_0, \ldots, d_{n-1})^T$, $\mathbf{w} = (w_0, \ldots, w_{n-1})^T$, $\mathbf{v} = (v_0, \ldots, v_{n-1})^T$ in $\fqn$. The vectors $\mathbf{a}$ and $\mathbf{b}$ are unknown to the adversary, while $\mathbf{c}$, $\mathbf{d}$, $\mathbf{v}$, and $\mathbf{w}$ are publicly known. 

\subsection{Reduction to matrix equations}


The following lemma shows that Equation~\eqref{acw} can be reduced to a matrix equation over $\fq$.

\begin{lemma}\label{lem1}
 Equation~\eqref{acw} is  equivalent to the matrix equation $M_{\mathbf{z}}({\mathbf{b},\mathbf{c}})\cdot \mathbf{a} = \mathbf{w}$ over $\fq$. 
\end{lemma}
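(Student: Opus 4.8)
The plan is to compare, coefficient by coefficient, the two sides of Equation~\eqref{acw} in the basis $\{x^\ell y : 0\le \ell\le n-1\}$ of the $\fq$-vector space $\fqa C_n y$, and to recognise the resulting $n$ scalar equations as the rows of the system $M_{\mathbf{z}}({\mathbf{b},\mathbf{c}})\cdot\mathbf{a}=\mathbf{w}$. Because the $x^\ell y$ form an $\fq$-basis, two elements of $\fqa C_n y$ coincide \emph{if and only if} all of their coordinates agree, so this comparison yields a genuine equivalence in both directions, and the whole proof reduces to matching the two families of scalar identities.

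First I would expand the left-hand side of~\eqref{acw}. Since $x$ has order $n$ (and, for the cocycle $\alpha=\alpha_\lambda$, the products of rotations with rotations and of rotations with reflections carry trivial twist, so no extra scalar factor appears in~\eqref{acw}), we have $x^{i+j+k}y=x^{(i+j+k)\bmod n}y$ for all $i,j,k$; collecting the terms $a_ic_jb_k\,x^{i+j+k}y$ according to the residue of $i+j+k$ modulo $n$ shows that the coefficient of $x^\ell y$ on the left is $\sum a_ic_jb_k$, the sum running over all triples $(i,j,k)\in\{0,\dots,n-1\}^3$ with $i+j+k\equiv\ell \pmod n$, while the coefficient of $x^\ell y$ on the right-hand side is simply $w_\ell$. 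Hence~\eqref{acw} is equivalent to the system
\[\sum_{\substack{0\le i,j,k\le n-1\\ i+j+k\equiv \ell \,(\mathrm{mod}\, n)}} a_ic_jb_k \;=\; w_\ell, \qquad \ell=0,\dots,n-1.\]

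Next I would identify this system with the rows of the matrix equation. By Definition~\ref{circ-def} the $(\ell,m)$-entry of the circulant matrix $M_{\mathbf{z}}({\mathbf{b},\mathbf{c}})$ is $z_{(\ell-m)\bmod n}(\mathbf{b},\mathbf{c})$, and by definition $z_r(\mathbf{b},\mathbf{c})=\sum_{i+j\equiv r} b_ic_j$; substituting and summing over $m$ gives
\[\bigl(M_{\mathbf{z}}({\mathbf{b},\mathbf{c}})\cdot\mathbf{a}\bigr)_\ell \;=\; \sum_{m=0}^{n-1} a_m\!\!\sum_{i+j\equiv \ell-m}\!\! b_ic_j \;=\; \sum_{i+j+m\equiv \ell} a_m b_i c_j,\]
which, after relabelling the summation index $m$ as $i$, is exactly the left-hand side of the scalar system above. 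Imposing $\bigl(M_{\mathbf{z}}({\mathbf{b},\mathbf{c}})\cdot\mathbf{a}\bigr)_\ell=w_\ell$ for all $\ell$ therefore recovers the system, and hence~\eqref{acw}. One could equally invoke the earlier identity $M_{\mathbf{z}}({\mathbf{b},\mathbf{c}})=M_{\mathbf{c}}\cdot M_{\mathbf{b}}$ and read the claim off from the multiplication rule of $\fqa C_n$, avoiding explicit index chasing.

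I do not anticipate a substantive obstacle: the argument is pure bookkeeping with residues modulo $n$. The only point demanding care is keeping the circulant indexing convention of Definition~\ref{circ-def} consistent with the convention $\mathbf{z}_{\mathbf{b},\mathbf{c}}=M_{\mathbf{c}}\cdot\mathbf{b}$, so that the triple-index sum $\sum_{i+j+k\equiv\ell}a_ic_jb_k$ emerges symmetric in the roles of $\mathbf{a},\mathbf{b},\mathbf{c}$, as it visibly must from the form of~\eqref{acw}; getting this right is precisely what pins down that the correct circulant is $M_{\mathbf{z}}({\mathbf{b},\mathbf{c}})$ (equivalently $M_{\mathbf{c}}M_{\mathbf{b}}$) acting on $\mathbf{a}$, rather than some other pairing of the three vectors.
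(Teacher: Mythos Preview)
Your proposal is correct and follows essentially the same approach as the paper: compare coefficients of the basis elements $x^\ell y$ to obtain the scalar system $\sum_{i+j+k\equiv \ell} a_ic_jb_k=w_\ell$, then recognise these as the rows of $M_{\mathbf{z}}(\mathbf{b},\mathbf{c})\cdot\mathbf{a}=\mathbf{w}$. If anything, your write-up is slightly more explicit about the equivalence going in both directions and about why no twist factor interferes with the reduction modulo $n$.
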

\begin{proof}
Equating the coefficients of the basis vectors $x^iy$ in Equation~\eqref{acw}, we have 
\begin{align*}w_i = &\sum\limits_{\ell=0}^{n-1} \ \sum\limits_{(j,k) \mid j+k=\ell \mod n}{c_j}b_k a_{i-\ell} \\= & \sum\limits_{\ell=0} \ \sum\limits_{(j,k)\mid j+k=i-\ell \mod n}{c_j}b_k a_{\ell} \\= &  
\begin{pmatrix}z_i(\mathbf{b},\mathbf{c}) & z_{i-1}(\mathbf{b},\mathbf{c}) & \ldots & z_0(\mathbf{b},\mathbf{c}) & z_{n-1}(\mathbf{b},\mathbf{c}) & \ldots & z_{i+1}(\mathbf{b},\mathbf{c}) \end{pmatrix}\cdot \mathbf{a}
\end{align*}

Thus, we can rewrite Equation~\eqref{acw} 
equivalently as the system \begin{align*}
    w_0 = & \begin{pmatrix}z_0(\mathbf{b},\mathbf{c}) & z_{n-1}(\mathbf{b},\mathbf{c}) & \ldots & z_1({\mathbf{b},\mathbf{c}}) \end{pmatrix}\cdot \mathbf{a} \\
     w_1 = & \begin{pmatrix}z_1(\mathbf{b},\mathbf{c})  &z_0(\mathbf{b},\mathbf{c})  & \ldots & z_2(\mathbf{b},\mathbf{c}) \end{pmatrix}\cdot  \mathbf{a} \\
  \vdots \\
       w_{n-1}= & \begin{pmatrix}z_{n-1}({\mathbf{b},\mathbf{c}}) &z_{n-2}({\mathbf{b},\mathbf{c}}) & \ldots &z_0({\mathbf{b},\mathbf{c}})\end{pmatrix}\cdot  \mathbf{a}
\end{align*}
In other words, $M_{\mathbf{z}}({\mathbf{b},\mathbf{c}})\cdot \mathbf{a} = \mathbf{w}$.
\end{proof}

 One may similarly rewrite Equation~\eqref{adv} as above, so that we have the following lemma. 
\begin{lemma}\label{lem2}
 Equation~\eqref{adv} is  equivalent to the matrix equation $\lambda M_\mathbf{z}(\mathbf{b},\mathbf{d})\cdot \mathbf{a} = \mathbf{v}$ over $\fq$. 
\end{lemma}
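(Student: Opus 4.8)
The plan is to mirror the proof of Lemma~\ref{lem1} verbatim, replacing the public vector $\mathbf{c}$ by $\mathbf{d}$ and the target vector $\mathbf{w}$ by $\mathbf{v}$, and keeping track of the extra scalar $\lambda$. Concretely, I would start from Equation~\eqref{adv}, namely $\lambda\sum_{i,j,k=0}^{n-1} a_i d_j b_k x^{i+j+k} = \sum_{i=0}^{n-1} v_i x^i$, and equate the coefficients of the basis element $x^i$ on both sides. Since the group relation in $C_n$ is $x^n=1$, the exponent $i+j+k$ is taken modulo $n$, exactly as in the previous lemma, so the coefficient of $x^i$ on the left is $\lambda\sum_{\ell=0}^{n-1}\big(\sum_{(j,k)\,:\,j+k=\ell \bmod n} d_j b_k\big) a_{i-\ell \bmod n}$.

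The next step is to recognize the inner sum as the constant $z_\ell(\mathbf{b},\mathbf{d}) = \sum_{j+k=\ell \bmod n} b_j d_k$ from the definition preceding Lemma~1 (note $z_\ell(\mathbf{b},\mathbf{d})=z_\ell$ is symmetric in the roles in the obvious indexing, since $j,k$ both range over all residues), so that the coefficient of $x^i$ becomes $\lambda\big(z_i(\mathbf{b},\mathbf{d})\,a_0 + z_{i-1}(\mathbf{b},\mathbf{d})\,a_1 + \cdots + z_0(\mathbf{b},\mathbf{d})\,a_i + z_{n-1}(\mathbf{b},\mathbf{d})\,a_{i+1}+\cdots\big)$, which is precisely $\lambda$ times the $i$-th row of the circulant matrix $M_{\mathbf{z}}(\mathbf{b},\mathbf{d})$ applied to $\mathbf{a}$. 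Assembling the $n$ scalar equations for $i=0,\ldots,n-1$ into a single vector identity gives $\lambda M_{\mathbf{z}}(\mathbf{b},\mathbf{d})\cdot\mathbf{a} = \mathbf{v}$. Conversely, reading the same computation backwards shows this matrix equation implies Equation~\eqref{adv}, so the two are equivalent.

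There is essentially no obstacle here: the argument is the same bookkeeping of indices modulo $n$ as in Lemma~\ref{lem1}, with $\mathbf{d}$ in place of $\mathbf{c}$ and the harmless nonzero scalar $\lambda$ carried through every term (one could even absorb it, writing $M_{\mathbf{z}}(\mathbf{b},\mathbf{d})\cdot\mathbf{a} = \lambda^{-1}\mathbf{v}$, but keeping $\lambda$ on the left matches the statement). The only point that warrants a word of care is making sure the cyclic index shifts $a_{i-\ell \bmod n}$ line up correctly with the columns of the circulant matrix as defined in Definition~\ref{circ-def}; since this was already checked in the proof of Lemma~\ref{lem1}, I would simply remark that Equation~\eqref{adv} has exactly the same shape as Equation~\eqref{acw} after stripping the factor $\lambda$ and swapping $\mathbf{c}\leftrightarrow\mathbf{d}$, $\mathbf{w}\leftrightarrow\mathbf{v}$, and invoke that proof rather than repeating the display.
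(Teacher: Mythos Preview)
Your proposal is correct and matches the paper's approach exactly: the paper does not even write out a separate proof for Lemma~\ref{lem2}, merely remarking that one may ``similarly rewrite Equation~\eqref{adv} as above,'' i.e., by the same index bookkeeping as in Lemma~\ref{lem1} with $\mathbf{d}$ in place of $\mathbf{c}$, $\mathbf{v}$ in place of $\mathbf{w}$, and the scalar $\lambda$ carried through. Your write-up is in fact more detailed than what the paper provides.
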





Combining the results of Lemmas~\ref{lem1} and \ref{lem2},  if the vectors $\mathbf{b}$,$\mathbf{c}$ and $\mathbf{d}$ are given, then $\mathbf{a}$ is a simultaneous solution to the matrix equations $M_{\mathbf{z}}({\mathbf{b},\mathbf{c}})\cdot \mathbf{a} = \mathbf{w}$ and $\lambda M_{\mathbf{z}}({\mathbf{b},\mathbf{d}})\cdot \mathbf{a} = \mathbf{v}$.  However, a priori the vector $\mathbf{b}$ is unknown to the adversary. If we can find $\mathbf{b}$ such that this system of equations has a simultaneous solution, then we are done with reducing the DPD problem to a solving a single system of linear equations, which can be done in polynomial time. Summarizing this discussion, we have the following result. 

\begin{proposition}
 Suppose that a vector $\mathbf{b}=(b_0, \ldots, b_{n-1})$ is such that the system of simultaneous equations $\lambda M_\mathbf{z}(\mathbf{b},\mathbf{d})\mathbf{a}=\mathbf{v}$ and $M_{\mathbf{z}}({\mathbf{b},\mathbf{c}})\mathbf{a}=\mathbf{w}$ has a simultaneous solution $\mathbf{a}=(a_0, \ldots, a_{n-1})$. Then, $s=\sum\limits_{i=0}^{n-1} a_ix^i$, $t=\sum\limits_{i=0}^{n-1} b_ix^iy$ is a solution of the equation $s ht=\gamma$.
\end{proposition}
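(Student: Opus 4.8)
The plan is to simply unwind the reductions established in Lemmas~\ref{lem1} and \ref{lem2} and verify that the constructed pair $(s,t)$ satisfies the original equation over $\fqa D_{2n}$. There is essentially no new content: the proposition is a bookkeeping statement that collects the two equivalences together with the observation that any valid choice of $\mathbf{b}$ furnishes a solution of the DPD equation. First I would note that since $t = \sum_{i=0}^{n-1} b_i x^i y$, the condition for $t \in \Gamma_\alpha$ is $b_k = b_{n-k}$ for $k = 1, \ldots, n-1$; strictly speaking one should either assume $\mathbf{b}$ is chosen to be a reversible vector, or remark that the search for $\mathbf{b}$ in the cryptanalysis is restricted to this subspace. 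With that understood, the element $t$ is a genuine element of the reversible subspace and the pair $(s,t)$ lies in $\fqa C_n \times \Gamma_\alpha$ as required.

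The core of the argument is then: starting from the hypothesis that $\mathbf{a}$ simultaneously solves $M_{\mathbf{z}}(\mathbf{b},\mathbf{c})\mathbf{a} = \mathbf{w}$ and $\lambda M_{\mathbf{z}}(\mathbf{b},\mathbf{d})\mathbf{a} = \mathbf{v}$, I would apply Lemma~\ref{lem1} to conclude that Equation~\eqref{acw} holds, and Lemma~\ref{lem2} to conclude that Equation~\eqref{adv} holds. Since Equations~\eqref{acw} and \eqref{adv} were obtained precisely by comparing, respectively, the $\fqa C_n y$-coefficients and the $\fqa C_n$-coefficients of both sides of $sht = \gamma$ (using part~3 of the first Lemma to see that the product $sht$ splits as a sum of an $\fqa C_n y$-part coming from $s \cdot (\sum c_j x^j) \cdot t$ and an $\fqa C_n$-part coming from $s \cdot (\sum d_j x^j y) \cdot t$), the simultaneous validity of \eqref{acw} and \eqref{adv} is equivalent to $sht = \gamma$. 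Hence $(s,t)$ is a solution of the DPD equation, which is what was claimed.

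The only mildly delicate point — and it is more an issue of precise statement than of proof — is the direction of the equivalences: Lemmas~\ref{lem1} and \ref{lem2} are stated as equivalences between the algebra equations and the matrix equations for \emph{fixed} $\mathbf{b}, \mathbf{c}, \mathbf{d}$, so one must be careful that the entries $z_\ell(\mathbf{b},\mathbf{c})$ and $z_\ell(\mathbf{b},\mathbf{d})$ are themselves functions of the chosen $\mathbf{b}$. Once $\mathbf{b}$ is fixed, though, everything is linear in $\mathbf{a}$ and the two lemmas apply verbatim. I would therefore write the proof as a two-line deduction: invoke Lemma~\ref{lem1} and Lemma~\ref{lem2} to pass from the matrix equations to \eqref{acw} and \eqref{adv}, then recombine \eqref{acw} and \eqref{adv} into the single identity $sht = \gamma$ by reversing the coefficient comparison at the start of Section~\ref{rdxn}. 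I do not anticipate any real obstacle; the substantive work was already done in the two preceding lemmas, and this proposition merely records their conjunction.

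\begin{proof}
Fix the vector $\mathbf{b} = (b_0, \ldots, b_{n-1})$ and set $t = \sum_{i=0}^{n-1} b_i x^i y$; since $\mathbf{b}$ is taken to be a reversible vector, $t \in \Gamma_\alpha$, and $s = \sum_{i=0}^{n-1} a_i x^i \in \fqa C_n$, so $(s,t) \in \fqa C_n \times \Gamma_\alpha$. By hypothesis $\mathbf{a}$ satisfies $M_{\mathbf{z}}(\mathbf{b},\mathbf{c})\mathbf{a} = \mathbf{w}$, so by Lemma~\ref{lem1} the element $s$ satisfies Equation~\eqref{acw}; likewise $\mathbf{a}$ satisfies $\lambda M_{\mathbf{z}}(\mathbf{b},\mathbf{d})\mathbf{a} = \mathbf{v}$, so by Lemma~\ref{lem2} it satisfies Equation~\eqref{adv}. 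But Equations~\eqref{acw} and \eqref{adv} were obtained by comparing, respectively, the coefficients of the basis elements $x^i y$ and the coefficients of the basis elements $x^i$ on the two sides of $sht = \gamma$, using part~3 of the first Lemma to identify the $\fqa C_n y$-component of $sht$ with $\sum_{i,j,k} a_i c_j b_k x^{i+j+k} y$ and its $\fqa C_n$-component with $\lambda \sum_{i,j,k} a_i d_j b_k x^{i+j+k}$. Hence the simultaneous validity of \eqref{acw} and \eqref{adv} is equivalent to the identity $sht = \gamma$, and therefore $s = \sum_{i=0}^{n-1} a_i x^i$, $t = \sum_{i=0}^{n-1} b_i x^i y$ is a solution of $sht = \gamma$.
\end{proof}
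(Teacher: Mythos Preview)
Your proposal is correct and follows exactly the paper's approach. In fact the paper does not give a separate proof of this proposition at all: it is stated immediately after Lemmas~\ref{lem1} and~\ref{lem2} with the words ``Summarizing this discussion, we have the following result,'' so your proof is simply the natural unpacking of that summary, invoking the two lemmas and reversing the coefficient comparison that opened Section~\ref{rdxn}.
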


Now, for an adversary, the vectors $\mathbf{a}$ and $\mathbf{b}$ are both unknown.  We will show below that in most cases, it suffices for the adversary to fix a suitable value for $\mathbf{b}$ and then proceed to solve any one of the linear equations in Lemmas~\ref{lem1} and \ref{lem2} for $\mathbf{a}$. More precisely, we show that if $M_{\mathbf{c}}$ and $M_{\mathbf{d}}$ are invertible, then a solution is possible for any randomly chosen $\mathbf{b}\in \Gamma_\alpha$ for which the correponding circulant matrix $M_{\mathbf{b}}$ is invertible. Since the values arise from a legitimate public key, we know that there exists a vector $\mathbf{b}\in \Gamma_\alpha$ such that the equations $\lambda M_\mathbf{z}(\mathbf{b},\mathbf{d})\mathbf{a}=\mathbf{v}$ and $M_\mathbf{z}({\mathbf{b},\mathbf{c}})\mathbf{a}=\mathbf{w}$ have a simultaneous solution $\mathbf{a}$.

\begin{proposition}
 Let the vectors $\mathbf{c}$ and $\mathbf{d}$ be such that $M_{\mathbf{c}}$ and $M_{\mathbf{d}}$ are invertible. Assume that at least one simultaneous solution $(\mathbf{a},\mathbf{b})$ exists to the matrix equations $\lambda M_\mathbf{z}({\mathbf{b},\mathbf{d}}) \mathbf{a}=\mathbf{v}$ and $M_\mathbf{z}(\mathbf{b},\mathbf{c})\mathbf{a}=\mathbf{w}$. Then, for any randomly chosen $\mathbf{b}\in \Gamma_\alpha$ such that $M_{\mathbf{b}}$ is invertible, the equations $\lambda M_\mathbf{z}({\mathbf{b},\mathbf{d}}) \mathbf{a}=\mathbf{v}$ and $M_\mathbf{z}(\mathbf{b},\mathbf{c})\mathbf{a}=\mathbf{w}$ have a simultaneous solution  $\mathbf{a}$ computable in polynomial time.
\end{proposition}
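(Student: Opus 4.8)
The plan is to use the factorizations $M_{\mathbf{z}}(\mathbf{b},\mathbf{c}) = M_{\mathbf{c}} M_{\mathbf{b}}$ and $M_{\mathbf{z}}(\mathbf{b},\mathbf{d}) = M_{\mathbf{d}} M_{\mathbf{b}}$ established in Section~\ref{circu}, which let me isolate the unknown $\mathbf{b}$ from the rest of the data. First I would observe that if $(\mathbf{a}_0,\mathbf{b}_0)$ denotes the true (unknown) simultaneous solution, then the single vector $\mathbf{u} := M_{\mathbf{b}_0}\mathbf{a}_0 \in \fqn$ satisfies $M_{\mathbf{c}}\,\mathbf{u} = \mathbf{w}$ and $\lambda M_{\mathbf{d}}\,\mathbf{u} = \mathbf{v}$. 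Since $M_{\mathbf{c}}$ is invertible, $\mathbf{u}$ is in fact uniquely pinned down by the public data as $\mathbf{u} = M_{\mathbf{c}}^{-1}\mathbf{w}$; equivalently, by invertibility of $M_{\mathbf{d}}$, as $\mathbf{u} = \lambda^{-1}M_{\mathbf{d}}^{-1}\mathbf{v}$. Thus the adversary can compute $\mathbf{u}$ outright, even though $\mathbf{a}_0$ and $\mathbf{b}_0$ remain hidden.

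The second step is to show that any invertible $M_{\mathbf{b}}$ will do. Given a randomly chosen $\mathbf{b}\in\Gamma_\alpha$ with $M_{\mathbf{b}}$ invertible, I would set $\mathbf{a} := M_{\mathbf{b}}^{-1}\mathbf{u}$ and check directly that $M_{\mathbf{z}}(\mathbf{b},\mathbf{c})\,\mathbf{a} = M_{\mathbf{c}}M_{\mathbf{b}}M_{\mathbf{b}}^{-1}\mathbf{u} = M_{\mathbf{c}}\mathbf{u} = \mathbf{w}$ and $\lambda M_{\mathbf{z}}(\mathbf{b},\mathbf{d})\,\mathbf{a} = \lambda M_{\mathbf{d}}\mathbf{u} = \mathbf{v}$, so $(\mathbf{a},\mathbf{b})$ is a genuine simultaneous solution. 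Every operation involved — building the circulant matrices $M_{\mathbf{c}}, M_{\mathbf{d}}, M_{\mathbf{b}}$ from their defining vectors, inverting $M_{\mathbf{c}}$ and $M_{\mathbf{b}}$, and a handful of matrix--vector products — is ordinary linear algebra over $\fq$ and costs $\mathrm{poly}(n,\log q)$, which gives the claimed polynomial-time computability. Notice that I never need to invert $M_{\mathbf{b}_0}$ for the \emph{true} secret; I only use that $M_{\mathbf{b}_0}\mathbf{a}_0$ is \emph{some} vector solving the two public equations.

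Honestly there is no serious obstacle in this argument; the one point that deserves care is the consistency of the two formulas $M_{\mathbf{c}}^{-1}\mathbf{w}$ and $\lambda^{-1}M_{\mathbf{d}}^{-1}\mathbf{v}$ for $\mathbf{u}$. This is exactly where the standing hypothesis — that at least one simultaneous solution exists, which holds here because the instance comes from a legitimately generated public key — enters: without it the two vectors could disagree and no $\mathbf{a}$ would satisfy both equations for any fixed $\mathbf{b}$. It is also worth remarking in the write-up that the reversibility constraint on $\mathbf{b}$ plays no role in the linear algebra: it is imposed only so that the recovered $t=\sum_i b_i x^i y$ lies in $\Gamma_\alpha$ as the DPD problem demands, while $\mathbf{a}$ is completely unconstrained since $s=\sum_i a_i x^i$ is a valid element of $\fqa C_n$ for every $\mathbf{a}\in\fqn$.
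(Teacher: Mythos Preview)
Your proof is correct and follows essentially the same route as the paper: both use the factorizations $M_{\mathbf{z}}(\mathbf{b},\mathbf{c})=M_{\mathbf{c}}M_{\mathbf{b}}$ and $M_{\mathbf{z}}(\mathbf{b},\mathbf{d})=M_{\mathbf{d}}M_{\mathbf{b}}$ to deduce from the existence of one solution that $M_{\mathbf{c}}^{-1}\mathbf{w}=\lambda^{-1}M_{\mathbf{d}}^{-1}\mathbf{v}$, and then set $\mathbf{a}:=M_{\mathbf{b}}^{-1}M_{\mathbf{c}}^{-1}\mathbf{w}$ for any invertible $M_{\mathbf{b}}$. Your explicit introduction of the intermediate vector $\mathbf{u}=M_{\mathbf{b}_0}\mathbf{a}_0$ is a slightly cleaner way to phrase the same computation, but the argument is otherwise identical.
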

\begin{proof}
Here, $\mathbf{b}$, $\mathbf{c}$, and $\mathbf{d}$ are invertible, and thus so are $M_\mathbf{z}({\mathbf{b},\mathbf{d}})=M_{\mathbf{d}}\cdot M_{\mathbf{b}}$ and $M_\mathbf{z}(\mathbf{b},\mathbf{c})=M_{\mathbf{c}}\cdot M_{\mathbf{b}}$.  Now, we know that a solution $(\mathbf{a},\mathbf{b})$ exists, and so for some vectors $\mathbf{a}$ and $\mathbf{b}$ we have
\begin{align*}
    \lambda M_{\mathbf{d}}M_{\mathbf{b}}\mathbf{a}=\mathbf{v}, \quad M_{\mathbf{c}}M_{\mathbf{b}}\mathbf{a}=\mathbf{w}, \ i.e. \;
    \lambda^{-1}M_{\mathbf{d}}^{-1}\mathbf{v} = M_{\mathbf{b}}\mathbf{a}, \ M_{\mathbf{c}}^{-1}\mathbf{w}=M_{\mathbf{b}} \mathbf{a}
\end{align*}

 So, independently of $\mathbf{a}$ and $\mathbf{b}$ we necessarily have
 \begin{align}\label{cdvw} \lambda^{-1}M_{\mathbf{d}}^{-1}\mathbf{v} = M_{\mathbf{c}}^{-1}\mathbf{w}
 \end{align} 
 Now let $\mathbf{b}$ be any random vector such that $M_{\mathbf{b}}$ is invertible. Multiplying equation \eqref{cdvw} by $M_{\mathbf{b}}^{-1}$, we get 
 \begin{align*} \lambda^{-1}M_{\mathbf{b}}^{-1}M_{\mathbf{d}}^{-1} \mathbf{v}= M_{\mathbf{b}}^{-1}M_{\mathbf{c}}^{-1}\mathbf{w} \\
     \implies \lambda^{-1}M_\mathbf{z}(\mathbf{b},\mathbf{d})^{-1}\mathbf{v} = M_\mathbf{z}({\mathbf{b},\mathbf{c}})^{-1}\mathbf{w} \\
 \end{align*}
 Setting $\mathbf{a}:=\lambda^{-1}M_\mathbf{z}(\mathbf{b},\mathbf{d})^{-1}M_\mathbf{v} = M_\mathbf{z}({\mathbf{b},\mathbf{c}})^{-1}\mathbf{w}$, we get $\mathbf{a}$ as the simultaneous solution $\lambda M_\mathbf{z}({\mathbf{b},\mathbf{d}})\mathbf{a} = \mathbf{v}$ and $M_\mathbf{z}({\mathbf{b},\mathbf{c}})\mathbf{a}=\mathbf{w}$.
\end{proof}

\subsection{The algorithm for cryptanalysis}

We have the following result.
\begin{corollary}\label{fnlr}
If $M_\mathbf{c}$ and $M_\mathbf{d}$ are invertible and $\gamma$ is a legitimate public key, then the equation $sht=\gamma$ in the unknowns $s \in \fqa C_n, \ t\in \Gamma_\alpha$ can be solved in polynomial time for a legitimate secret key $(s,t)$.  
\end{corollary}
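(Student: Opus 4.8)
The plan is to read Corollary~\ref{fnlr} off the preceding Proposition, supplying only the verification that its two hypotheses hold in the situation of the corollary. First I would note that the invertibility of $M_{\mathbf c}$ and $M_{\mathbf d}$ is assumed outright, so the only point to establish is the existence of \emph{at least one} simultaneous solution $(\mathbf a,\mathbf b)$ of $\lambda M_{\mathbf z}({\mathbf b},{\mathbf d})\mathbf a=\mathbf v$ and $M_{\mathbf z}({\mathbf b},{\mathbf c})\mathbf a=\mathbf w$. This is exactly where the hypothesis that $\gamma$ is a legitimate public key is used: by definition there are $s_0\in\fqa C_n$ and $t_0\in\Gamma_\alpha$ with $s_0 h t_0=\gamma$, and comparing coefficients as in Equations~\eqref{acw} and~\eqref{adv}, together with Lemmas~\ref{lem1} and~\ref{lem2}, shows that the coefficient vectors $(\mathbf a_0,\mathbf b_0)$ of $(s_0,t_0)$ form such a simultaneous solution. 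Hence the Proposition applies.

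Next I would exhibit a concrete, efficiently available $\mathbf b\in\Gamma_\alpha$ with $M_{\mathbf b}$ invertible, as the Proposition requires one. The cleanest choice is the deterministic $\mathbf b=(1,0,\ldots,0)^T$: it lies in $\Gamma_\alpha$ since $b_i=0=b_{n-i}$ for $i=1,\ldots,n-1$, and $M_{\mathbf b}$ is the identity matrix, hence invertible. (One could instead take a random reversible vector: since $p\mid 2n$ with $p$ odd forces $p\mid n$, Corollary~\ref{inver} gives invertibility probability $1-\tfrac1q$ for a random circulant — but because $\Gamma_\alpha$ is a proper subspace of $\fqn$ this would need a small extra argument, whereas the deterministic choice avoids it.)

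With both hypotheses in place, the Proposition produces, in polynomial time, a vector $\mathbf a$ that simultaneously solves $\lambda M_{\mathbf z}({\mathbf b},{\mathbf d})\mathbf a=\mathbf v$ and $M_{\mathbf z}({\mathbf b},{\mathbf c})\mathbf a=\mathbf w$; with $\mathbf b=(1,0,\ldots,0)^T$ this is just $\mathbf a=M_{\mathbf c}^{-1}\mathbf w$, obtainable by Gaussian elimination in $O(n^3)$ field operations. Finally I would invoke the earlier Proposition linking such $(\mathbf a,\mathbf b)$ back to the algebra: setting $s=\sum_{i=0}^{n-1}a_i x^i$ and $t=\sum_{i=0}^{n-1}b_i x^i y = y$ gives $(s,t)\in\fqa C_n\times\Gamma_\alpha$ with $sht=\gamma$, i.e.\ a legitimate secret key. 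The whole procedure is a bounded number of $n\times n$ inversions and multiplications over $\fq$, hence runs in polynomial time.

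I do not expect a real obstacle, since this is a corollary; the only step needing care is the one flagged above — producing a usable $\mathbf b$ — precisely because $\Gamma_\alpha$ is a proper subspace, so the circulant-invertibility count of Corollary~\ref{inver} cannot be quoted verbatim for $\Gamma_\alpha$. Fixing $\mathbf b=(1,0,\ldots,0)^T$ removes that concern and also makes transparent that the sole probabilistic ingredient in the overall cryptanalysis is the event that the public data $M_{\mathbf c},M_{\mathbf d}$ are both invertible, which occurs with probability $(1-\tfrac1q)^2$.
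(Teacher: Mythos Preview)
Your proposal is correct and in fact cleaner than the paper's own argument. Both proofs begin identically: the hypothesis that $\gamma$ is a legitimate public key furnishes, via Lemmas~\ref{lem1} and~\ref{lem2}, the simultaneous solution $(\mathbf a_0,\mathbf b_0)$ required by Proposition~4, and then Proposition~3 converts the resulting $(\mathbf a,\mathbf b)$ back into a legitimate key $(s,t)$. The divergence is in how a usable $\mathbf b\in\Gamma_\alpha$ with $M_{\mathbf b}$ invertible is produced. The paper samples $\mathbf b$ at random from $\Gamma_\alpha$ and appeals to Corollary~\ref{inver} for the expected number of trials, explicitly noting that since $\Gamma_\alpha$ is a proper subspace of $\fqn$ one must ``assume that the probability of invertibility remains approximately the same on these reversible vectors'' --- a heuristic backed by experiments rather than a proof. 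You instead pick $\mathbf b=(1,0,\ldots,0)^T$ deterministically, observe that it lies in $\Gamma_\alpha$ and that $M_{\mathbf b}=I$, and thereby sidestep the subspace issue entirely. This makes the corollary fully rigorous and deterministic (and incidentally simplifies the formula to $\mathbf a=M_{\mathbf c}^{-1}\mathbf w$), whereas the paper's version matches the randomized presentation of Algorithm~\ref{cryptanalysis} at the cost of an unproven probabilistic assumption on $\Gamma_\alpha$.
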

\begin{proof}
Since $\gamma$ is a legitimate public key, a least one simultaneous solution $(\mathbf{a},\mathbf{b})$ exists (the one corresponding to the initial secret key) to the matrix equations $\lambda M_\mathbf{z}({\mathbf{b},\mathbf{d}}) \mathbf{a}=\mathbf{v}$ and $M_\mathbf{z}(\mathbf{b},\mathbf{c})\mathbf{a}=\mathbf{w}$. Now, from Corollary~\ref{inver}, a vector $\mathbf{b} \in \fqn$ such that $\mathbf{b}$ is invertible can be found in an expected $\frac{1}{1-\frac{1}{q}}$ number of steps. For the solution of the DPD problem, one further requires that the vector $\mathbf{b}$ satisfies $b_i=b_{n-1}$ for $1\leq i \leq n-1$, i.e. that $\mathbf{b}\in \Gamma_\alpha$. However, it is prudent to assume that the probability of invertibility remains approximately the same on these reversible vectors. Thus, by Proposition 4, we can set $b$ to be any vector in $\Gamma_\alpha$ such that $M_\mathbf{b}$ is invertible. The expected number of steps before such a $\mathbf{b}$ is found is $\frac{1}{1-\frac{1}{q}}$, which is very close to 1, and thus takes time $\mathcal{O}(1)$. This is also confirmed by experimental results, where randomly chosen symmetric vectors $\mathbf{b}\in \Gamma_\alpha$ were invertible in almost all trials. Once such a vector $\mathbf{b}$ is found, one computes $\mathbf{a}=\lambda^{-1}M_\mathbf{z}(\mathbf{b},\mathbf{d})^{-1}M_\mathbf{v} = M_\mathbf{z}({\mathbf{b},\mathbf{c}})^{-1}\mathbf{w}$ in polynomial time. By Proposition 3, this gives a solution to the DPD problem $sht=\gamma$.
\end{proof}

We now state an algorithm to cryptanalyze the key exchange. Its correctness follows from the above discussion.

\begin{algorithm}[H]\label{cryptanalysis}
\hspace*{\algorithmicindent} \textbf{Input} {Parameter $\lambda$ and the cocycle $\alpha=\alpha_\lambda$, public element $h=\sum\limits_{i=0}^{n-1}c_ix^i + \sum\limits_{i=0}^{n-1}d_ix^iy$, public key $\gamma=\sum\limits_{i=0}^{n-1}{v_ix^i}+\sum\limits_{i=0}^{n-1}{w_ix^iy}$.} \\
 \hspace*{\algorithmicindent} \textbf{Output} {A solution $(s,t) \in \fqa C_n \times \Gamma_\alpha$ satisfying $sht=\gamma$. This tuple is a solution to the DPD problem.}
\caption{Cryptanalysis of Key Exchange over $\fqa D_{2n}$}
\begin{algorithmic}[1]
\STATE Define vectors in $\fqn$: $\mathbf{c}=(c_0,\ldots, c_{n-1})$, $\mathbf{d}=(d_0,\ldots, d_{n-1})$, $\mathbf{v}=(v_0,\ldots, v_{n-1})$, $\mathbf{w}=(w_0,\ldots, w_{n-1})$. 
\STATE If $M_\mathbf{c}$ or $M_\mathbf{d}$ is not invertible \\
\qquad Return Fail
\STATE Pick  a vector $\mathbf{b}=(b_0,\ldots, b_{n-1})\leftarrow \Gamma_\alpha$ at random. 
\STATE If $M_\mathbf{b}$ is not invertible, repeat step (3). If it is invertible, go to step (5).
\STATE Compute $\mathbf{a}=\lambda^{-1} M_\mathbf{z}(\mathbf{b}, \mathbf{c})^{-1}\mathbf{w}=M_{\mathbf{b}}^{-1}M_{\mathbf{d}}^{-1}\mathbf{v}$.
 \STATE With $\mathbf{a}= (a_0,\ldots, a_{n-1})$, set $s=\sum_{i=0}^{n-1}a_ix^i$ and $t=\sum_{i=0}^{n-1}b_ix^i y$. 
 \STATE Return $(s,t)$.
\end{algorithmic}
\end{algorithm}

\begin{remark}
The solution $(s,t)$ to the DPD returned by Algorithm~\ref{cryptanalysis} and referenced in Corollary~\ref{fnlr} is a legitimate secret key, but not necessarily the same as the originally chosen secret key. In fact, as is clear from the discussion above, $t=\sum\limits_{i=0}^{n-1} b_ix^iy\in \Gamma_\alpha$ can be selected at random, and a solution for $s\in \fqa C_n$ is found long as $M_{\mathbf{b}}$ is invertible..
\end{remark}


Now, since $\mathbf{c}$ and $\mathbf{d}$ are random in $\fqn$, the circulant matrices $M_{\mathbf{c}}$ and $M_{\mathbf{d}}$ are invertible with high probability. The probability that the algorithm fails is the probability that at least one of them is not invertible, which is given by $1-(1-\frac{1}{q})^2$. Clearly this quantity shrinks with increasing values of $q$ and $n$. In \cite{orig} the smallest values of these parameters are $q=n=19$, for which this probability is $\approx 0.1$. Thus, Algorithm~\ref{cryptanalysis} succeeds in cryptanalyzing the system with a probability of at least 90 percent.

 An immediate corollary of the above argument is that the two-sided multiplication action \[(\fqa C_n \times \Gamma_\alpha) \times \fqa D_{2n} \rightarrow \fqa D_{2n}\]
\[(s,t)\cdot h \mapsto sht, \ s\in \fqa C_n, \ t \in \Gamma_\alpha\] is far from being injective, contrary to the assumption of the authors. In fact, for most values of $t$ and $\gamma \in \fqa D_{2n}$, there is a unique pre-image $s\in \fqa C_n$ such that $sh t=\gamma$. Thus, the probability that random choosing yields the right solution is not $1/|\fqa C_n \times \Gamma_\alpha | $, as claimed by the authors. The real probability is greater than or equal to probability that the matrices $M_\mathbf{c}$ and $M_\mathbf{d}$ are invertible and that the correct value of $s$ corresponding to $t$ is chosen, which is $\approx 1/|\fqa C_n |$ (we already saw that the probability of the matrices being invertible is very close to 1). 
From this, one also sees that the run time of an exhaustive search would be linear in $|\fqa  C_n  | =p^{nm}$, rather than in $|\fqa  C_n \times \Gamma_\alpha | =p^{nm} p^{m\lfloor \frac{n+1}{2}\rfloor }$, as claimed by the authors of \cite{orig}.

\subsection{Examples}
In this subsection, we present some examples generated by computer search, using the algebra software package SageMath \cite{sagemath}. For the structure of the twisted group algebra and the generation of the keys, we made use of the original source code of the authors. Our entire working code including the cryptanalysis can be found at: \url{https://github.com/simran-tinani/Cryptanalysis-of-twisted-group-algebra-system}

In the following examples, an element $\sum\limits_{i=0}^{n-1}a_ix^i+ \sum\limits_{i=0}^{n-1}b_ix^iy$ of $\fqa D_{2n}$ is denoted by the $2n$-tuple $(a_0, \ldots, a_{n-1}, b_0, \ldots, b_{n-1})$.

\begin{example}
For parameters $n=23,\
q=23$, $\lambda=11$, and using the notations above, consider the randomly generated public element $h$, and secret key $(s,t)\in \fqa C_n\times \Gamma_\alpha$.
\scriptsize \begin{align*}
   h= & (19,9,4,14,6,13,21,18,18,10,9,2,5,15,13,22,18,13,16,20,11,2,11,6,18,7,17,8,20,20,17,7,15,1,11,9,17,4,11,16,\\  & 5,17,19,18,19,20), \\
   s=& (20,17,20,22,18,18,11,12,2,3,18,11,2,18,3,14,10,2,13,14,3,9,17,0,0,0,0,0,0,0,0,0,0,0,0,0,0,0,0,0,0,0,0,0,0,0) \\ 
   t = & (0,0,0,0,0,0,0,0,0,0,0,0,0,0,0,0,0,0,0,0,0,0,0,22,0,14,3,2,19,4,15,1,21,3,6,6,3,21,1,15,4,19,2,3,14,0)
\end{align*}\normalsize
Using the method in Section~\ref{rdxn}, the program computed the solution $(\tilde{s}, \tilde{t})$ to the DPD, where
\scriptsize
\begin{align*}
   \tilde{s} = & (13,16,5,1,21,1,2,8,17,2,12,11,4,0,20,7,19,16,3,14,22,6,2,0,0,0,0,0,0,0,0,0,0,0,0,0,0,0,0,0,0,0,0,0,0,0) \\
   \tilde{t}= & (0,0,0,0,0,0,0,0,0,0,0,0,0,0,0,0,0,0,0,0,0,0,0,7,2,17,16,12,16,12,10,6,8,3,0,0,3,8,6,10,12,16,12,16,17,2)\end{align*}
   \normalsize
 It was verified that $sht=\tilde{s}h\tilde{t}$, so a legitimate private key was recovered.
\end{example}

\begin{example}

For parameters $n=19,\
q=19$, $\lambda=18$, and using the notations above, consider the randomly generated public element $h$, and secret key $(s,t)\in \fqa C_n\times \Gamma_\alpha$.
\scriptsize \begin{align*}
   h= & (14,5,13,4,10,12,8,6,17,18,15,1,14,14,15,15,13,4,6,7,7,11,13,4,11,12,3,11,18,8,3,3, 6,11,17,1,7,10), \\
   s=& (18,14,1,0,15,5,7,0,1,7,10,5,9,18,2,12,17,12,14,0,0,0,0,0,0,0,0,0,0,0,0,0,0,0,0,0,0,0) \\ 
   t = & (0,0,0,0,0,0,0,0,0,0,0,0,0,0,0,0,0,0,0,1,17,4,10,18,5,5,9,15,18,18,15,9,5,5,18,10,4,17)
\end{align*}\normalsize
Using the method in Section~\ref{rdxn}, the program computed the solution $(\tilde{s}, \tilde{t})$ to the DPD, where
\scriptsize
\begin{align*}
   \tilde{s} = & (12,6,4,10,12,4,5,7,0,15,8,7,1,0,2,15,6,7,1,0,0,0,0,0,0,0,0,0,0,0,0,0,0,0,0,0,0,0) \\
   \tilde{t}= & (0,0,0,0,0,0,0,0,0,0,0,0,0,0,0,0,0,0,0,14,13,11,10,3,3,1,1,3,16,16,3,1,1,3,3,10,11,13)\end{align*}
   \normalsize
 It was verified that $sht=\tilde{s}h\tilde{t}$, so a legitimate private key was recovered.

\end{example}

\begin{example}

For parameters $n=41,\
q=41$, $\lambda=29$, and using the notations above, consider the randomly generated public element $h$, and secret key $(s,t)\in \fqa C_n\times \Gamma_\alpha$.
\scriptsize \begin{align*}
   h= & (33,2,29,20,9,5,36,13,26,15,38,27,33,4,20,4,14,23,12,0,35,5,38,40,1,6,16,26,9,0,29,6,32,26,14,32,18,29, 13,35, 7, \\ & 8,38,26,20,25,24,18,30,28,22,8,21,1,33,29,2, 22,25,6,13,24,18,26,30,38,3,1,39,11,15,10,9,16,3,7,36,26,22,6,0,15), \\
   s=& (24,2,12,32,10,2,27,1,5,7,17,32,7,24,28,26,17,8,32,18,13,8,19,17,0,11,33,17,27,1,36,3,33,9,30,34,\\ & 22,26,21,5,29,0,0,0,0,0,0,0,0,0,0,0,0,0,0,0,0,0,0,0,0,0,0,0,0,0,0,0,0,0,0,0,0,0,0,0,0,0,0,0,0,0) \\ 
   t = & (0,0,0,0,0,0,0,0,0,0,0,0,0,0,0,0,0,0,0,0,0,0,0,0,0,0,0,0,0,0,0,0,0,0,0,0,0,0,0,0,0,13,8,18,11,31,\\ & 9,34,3,16,  39,32,0,15,31,3,26,0,31, 39,4,40,40,4,39,31,0,26,3,31,15,0,32,39,16,3,34,9,31,11,18,8)
\end{align*}\normalsize
Using the method in Section~\ref{rdxn}, the program computed the solution $(\tilde{s}, \tilde{t})$ to the DPD, where
\scriptsize
\begin{align*}
   \tilde{s} = & (39,9,4,23,8,8,10,40,31,27,22,36,11,14,35,28,25,0,0,10,16,33,24,6,33,17,15,13,17,10,18,31, 33,16,13,\\ & 28, 2,36,37,13,30,0,0,0,0,0,0,0,0,0,0,0,0,0,0,0,0,0,0,0,0,0,0,0,0,0,0,0,0,0,0,0,0,0,0,0,0,0,0,0,0,0) \\
   \tilde{t}= & (0,0,0,0,0,0,0,0,0,0,0,0,0,0,0,0,0,0,0,0,0,0,0,0,0,0,0,0,0,0,0,0,0,0,0,0,0,0,0,0,0,35,6,8,35,23,22,39,\\ & 12,22,36,34,1,29,8,16,40,29,16,24,14,31,31,14,24,16,29,40,16,8,29,1,34,36,22,12,39,22,23,35,8,6)\end{align*}
   \normalsize
 It was verified that $sht=\tilde{s}h\tilde{t}$, so a legitimate private key was recovered.
\end{example}

Clearly, in each of the above examples, $s\neq \tilde{s}$ and $t\neq \tilde{t}$, but $sht= \tilde{s}h\tilde{t}$. Thus, each of these examples also serves as a counterexample to the injectivity of the two-sided action. 

\section{Conclusion}
In this paper, we provided a method for cryptanalysis of the protocol in \cite{orig} which is based on a double-sided multiplication problem in the twisted dihedral group algebra $\fqa D_{2n}$. We first showed that the underlying DPD algorithmic problem is equivalent to a commutative semigroup action problem. For our cryptanalysis, we showed that the task for an adversary attempting to solve the underlying DPD problem is equivalent to the solution of two equations in $\fq$ involving circulant matrices. We further demonstrated a polynomial time solution for these equations using linear algebra, which works with a probability of $1-(1-\frac{1}{q})^2$. For the proposed values of the parameters in \cite{orig}, this gives a success rate of at least 90 percent. The key exchange system in \cite{orig} and its underlying algorithmic problem are both therefore clearly insecure, even in a classical setting.

\bibliographystyle{plain}
\bibliography{references}

\end{document}